\newcommand{\R}{\mathbb{R}}
\newcommand{\N}{\mathbb{N}}
\newcommand{\ch}{\mathit{CH}}
\newcommand{\su}[1]{{\mathrm{succ}(#1)}}
\newcommand{\pr}[1]{{\mathrm{pred}(#1)}}
\newtheorem{observation}{Observation}
\newcommand{\keywords}[1]{\par\addvspace\baselineskip%
  \noindent\keywordname\enspace\ignorespaces#1}
\begin{document}

\mainmatter              %
\title{Finding the $\Theta$-Guarded Region}
\titlerunning{Finding the $\Theta$-Guarded Region}  %

\author{Domagoj Matijevi\'c\inst{1}  \and  Ralf Osbild\inst{2}}
\institute{
Department of Mathematics, J.J. Strossmayer University, Osijek, Croatia
\and
Max-Planck-Institut f\"{u}r Informatik, Saarbr\"{u}cken, Germany\\
\email{\{dmatijev,osbild\}@mpi-inf.mpg.de}
}

\maketitle              %

\begin{abstract}
We are given a finite set of $n$ points (\emph{guards}) $G$ in the plane $\R^2$ and an angle $0\le\Theta\le2\pi$.
A $\Theta$-cone is a cone with apex angle $\Theta$.
We call a $\Theta$-cone \emph{empty (with respect to $G$)}
if it does not contain any point of $G$. 
A point $p\in\R^2$ is called \emph{$\Theta$-guarded}
if every $\Theta$-cone with its apex located at $p$ is non-empty.
Furthermore, the set of all $\Theta$-guarded points
is called the \emph{$\Theta$-guarded region},
or the $\Theta$-region for short.

We present several results on this topic.
The main contribution of our work is to
describe the  $\Theta$-region
with $O(\frac{n}{\Theta})$
circular arcs,
and we give an algorithm to compute it.
We prove a tight $O(n)$ worst-case bound on the complexity
of the $\Theta$-region for $\Theta\ge\frac{\pi}{2}.$
In case $\Theta$ is bounded from below by a positive constant,
we prove an almost linear bound
$O(n^{1+\varepsilon})$ for any $\varepsilon>0$ on the complexity.
Moreover,
we show that there is a sequence of inputs such that the asymptotic bound
on the complexity of the $\Theta$-region is $\Omega(n^2)$.
In addition we point out gaps in the proofs of a recent publication
that claims an $O(n)$ bound on the complexity for any constant angle $\Theta.$
\keywords{$\Theta$-guarded region,
unoriented $\Theta$-maxima,
convex hull generalization,
good $\Theta$-illumination,
$\alpha$-embracing contour.}
\end{abstract}

\section{Introduction}\label{s:intro}
Illumination and guarding problems have been a popular topic of study in mathematics and
computer science for several decades. One instance in this class of problems is the classical one 
posed by Victor Klee~\cite{orourke87}: \emph{How many guards are necessary, and how many are sufficient to
patrol the paintings and works of art in an art gallery with $n$ walls?} 
While this particular problem has been solved shortly after by Chvatal~\cite{ch75}
proving a tight $\lfloor \frac{n}{3}\rfloor$ 
bound, many other variants in this problem class have appeared in the literature, 
see e.g. \cite{u-agip-00} for a general survey on the topic.

In this paper
we consider a guarding problem with a fixed number of guards
which have fixed positions in the plane.
We concentrate on the mathematical description, the complexity,
and the computation of the guarded area.

The model is as follows:
We are given a finite set of points (\emph{guards}) $G$ in the plane $\R^2$.
A $\Theta$-cone is a cone with apex angle $\Theta$. We call a 
$\Theta$-cone \emph{empty (with respect to $G$)},
if it does not contain any point of $G$ in its interior.
A point $p\in\R^2$ is called \emph{$\Theta$-guarded (with respect to $G$)},
if every $\Theta$-cone with apex located at $p$ is non-empty.
The set of all $\Theta$-guarded points is called
the \emph{$\Theta$-guarded region},
or the $\Theta$-region for short. 
We consider $\Theta$-cones as open sets,
hence the $\Theta$-region is an open set, too.
The rationale behind this model is that a point is well-guarded only if it is guarded from all sides.

\subsection{Previous Work}
For a given set $G$ of $n$ points in the plane,
Avis et al.~\cite{avis} were the first to
introduce the notion of \emph{unoriented $\Theta$-maxima.}
They say that some point $g\in G$ is a $\Theta$-maxima if there exists
an empty $\Theta$-cone with apex at $g$.
Hence a point $g$ is $\Theta$-maxima if it is not $\Theta$-guarded
with respect to $G.$
They present an $O(\frac{n}{\Theta}\log n)$
algorithm for computing the unoriented $\Theta$-maximum of the set $G,$
or to put it in other words,
an algorithm to query each point in $G$ if it is $\Theta$-guarded or not.
A slight variation of their algorithm can actually query
any finite point-set $P$ in $O(\frac{n+|P|}{\Theta}\log (n+|P|))$ time
as we show in Lemma~\ref{lm:avis}.
They further show that
the unoriented $\frac{\pi}{2}$-maxima
can be computed in $O(n)$ expected time.

Abellanas et~al.~\cite{ABM07} extent the guarding model
(there it is called \emph{good $\Theta$-illumination})
by a range $r$, i.e.,
a guard $g\in G$ can only guard points inside the circle of radius $r$
that is centered at $g$.
Beneath other results they show how to check if a query point $p$
is $\Theta$-guarded in $O(n)$ time and
output the necessary range and guards as witnesses.

Over years several generalizations of the standard convex hull of
a point set
have been proposed, like the $\alpha$-hull~\cite{eks83},
the $k$-th iterated hull~\cite{c85},
and the related concept of the $k$-hull
($k$-depth contour)~\cite{csy87}.

\subsection*{Our contribution}

After some general observations,
we describe the structure of the boundary of the $\Theta$-region
for different values of $\Theta$ in Section~\ref{s:general}.
There we also give an easy and efficient $O(n\log n)$ time algorithm
to compute the boundary in case $\Theta\geq \pi.$
In the main part of the paper we concentrate on the case $\Theta<\pi$,
since for these angles
the problem becomes much more involved and the boundary 
of the $\Theta$-region more complex to understand.
In Section~\ref{s:boundary}
we show that the boundary of the $\Theta$-region is
contained in an arrangement of
circular arcs.
In Section~\ref{s:upperbounds} we bound this set of arcs by
$O(\frac{n}{\Theta})$.
Note that in our work
$\Theta$ and $n$ are independent parameters.
In particular, asymptotic bounds are stated in $n$ and $\frac{1}{\Theta}.$
For $\Theta\ge \frac{\pi}{2}$ we prove that the complexity
of the $\Theta$-region is $O(n)$. 
If $\Theta>\delta$ for a positive constant $\delta>0$,
we show that the complexity is $O(n^{1+\varepsilon}),$
for any $\varepsilon>0.$
In Section~\ref{s:lowerbound}
we give a generic example
for a $\Theta$-region with complexity $\Omega(n^2)$
where the angle $\Theta$ is of order $\frac{1}{n}.$
In Section~\ref{s:algorithm} we give an algorithm
to compute the $\Theta$-region
in $O(n^{\frac{3}{2}+\xi}/\Theta + \mu \log n)$ time,
for any $\xi>0$, 
where $\mu$ denotes the complexity
of the arrangement of the $O(\frac{n}{\Theta})$ arcs.
Our algorithm is  based on the Partitioning Theorem~\cite{Matousek}
and on the computation of an arrangement of circular arcs. 

\paragraph{Remark:}
Besides our work there is an independent and very recent
publication of Abellanas et al.~\cite{ACM08}. 
There the complexity of the $\Theta$-region
(called the $\alpha$-embracing contour) 
is claimed to be {linear}
for {all} constant $\Theta$, and an algorithm that runs in 
$O(n^2\log n)$ time and $O(n^2)$ space is proposed.
For constant $\Theta < \frac{\pi}{2}$
our $O(n^{1+\varepsilon})$ bound on the complexity of the $\Theta$-region
is weaker than the claimed $O(n)$ bound.
However,
we believe that the main results in \cite{ACM08} are not correct,
and we will point out two technical mistakes
in the Appendix.

\subsection{Remark on Plotted Pictures}\label{s:pictures}
The computer generated pictures
are based on the value of the continuous function
$f:\R^2\setminus G\to(0,2\pi]$
where $f(p)=\max\,\{\Theta: \exists\, \textrm{empty $\Theta$-cone with apex $p$} \}.$
The left picture of Figure~\ref{fig:not_connected}
and the two rightmost pictures in Figure~\ref{fig:squared1}
are generated by plotting a grid point shaded,
iff $f$ has a value below the threshold $\Theta$.
In the right picture of Figure~\ref{fig:not_connected}
we have mapped different intervals of function values in $[0,\pi]$
to different gray scale values to visualize isolines
(along the boundary of the gray scale value) of $f$
in this example.
Although these pictures visualize only function values at grid points,
one can rely on the pictures,
since we deal with cones of a certain angle and
not with arbitrarily thin stripes
that could somehow pass between grid points.

\section{The Shape of the $\Theta$-Guarded Region}\label{s:general}
We start with some observations.
A point $p\in\R^2$ does not belong to
the $\Theta$-region, if there is an empty $\Theta$-cone with apex $p$.
Hence, no point inside an empty cone can belong to the region,
and hence, the region can not contain holes.
A point $p$ lies on the boundary of a $\Theta$-region,
if the closure\footnote{Exceptionally we consider closed cones here.}
of each $\Theta$-cone with apex $p$ is non-empty,
and there is at least
one empty (open) $\Theta$-cone with apex at $p.$
The example in Figure~\ref{fig:not_connected} shows
that the $\Theta$-region is not necessarily connected for $0<\Theta<\pi$.
The shape of the $\Theta$-region is invariant under
translation, rotation, and scaling of $G.$
\begin{figure}[t]
  \centering
  \includegraphics[width=.95\columnwidth]{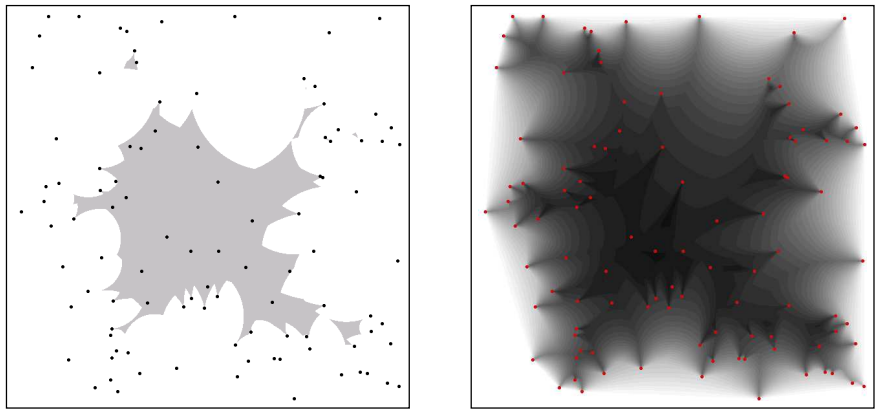}
  \caption{An example with $|G|=50.$
   The $\Theta$-region is not necessarily connected for $0<\Theta<\pi$ (left).
   The isolines of function $f$
   show how components of the $\Theta$-region
   disconnect for decreasing $\Theta$ in this example (right).}
  \label{fig:not_connected}
\end{figure}

The shapes of all $\Theta$-regions can be grouped
according to $\Theta$.
The boundary of the $\pi$-region is just the \emph{convex hull} $\ch(G)$,
because the intersection of all half-planes containing $G$ (convex hull)
is the same as removing every half-plane from $\R^2$
that does not contain any point of $G$ ($\pi$-region).
However, for $0<\Theta<\pi$,
empty (convex) $\Theta$-cones can enter the convex hull through the edges,
while for $\pi<\Theta<2\pi$ the apexes of empty (concave) $\Theta$-cones
do not even have to touch the convex hull (see Figure~\ref{fig:shaperegions}).
\begin{figure}[t]
  \centering
  \includegraphics[width=.8\columnwidth]{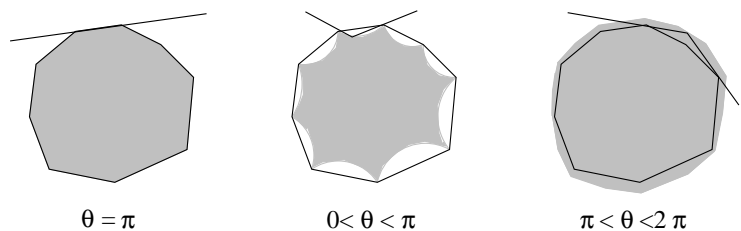}
  \caption{For $\Theta<\pi$ (resp. $\Theta>\pi$)
    the region lies inside (outside) the convex hull $\ch(G)$ and
    the bounding arcs are bend inside (outside) the region.}
  \label{fig:shaperegions}
\end{figure}
Therefore, the $\Theta$-region is connected, if $\Theta\ge\pi$.
Trivially, the $2 \pi$-region is the plane $\R^2$ and the $0$-region is
the empty set.

Before we discuss the $\Theta$-region for $0<\Theta<\pi$
in Sections~\ref{s:boundary}--\ref{s:algorithm},
we discuss the simpler case $\pi<\Theta<2\pi$ below.
Throughout the paper we use
the property about \emph{inscribed angles:}
\emph{Given a circular arc $C_{l,r}$ from $l$ to $r$,
then $\angle lpr = \angle lqr$ holds for all $p,q\in C_{l,r}$.}
We write $C_{l,r}^\alpha$ if the inscribed angle is $\alpha$.
The arc end points are always given in counterclockwise order.

\subsection{Finding the $\Theta$-Guarded Region for $\Theta > \pi$}\label{s:larger}
As already discussed (see Figure~\ref{fig:shaperegions}), 
for $\Theta>\pi$ 
every point in the convex hull interior of $G$ is $\Theta$-guarded. 
Intuitively, the boundary of the $\Theta$-region is drawn
by the apex of an empty
$\Theta$-cone which is rotated around the convex hull $\ch(G)$
such that its rays are always tangent to $\ch(G)$. 
The following algorithm computes the boundary of the $\Theta$-region. 

We first compute the 
clockwise sequence of guards $G'=\{g_1,\ldots, g_k\}$ defining the convex hull
(see for example \cite{ps85}).
Formalizing the intuition given above, we construct an
algorithm that outputs circular arcs defining the boundary of
the $\Theta$-region as follows.
We identify all  pairs $(g_i,g_j) \in G'\times G'$
with $g_i \ne g_j$,
for which there exists an empty $\Theta$-cone
that is tangent to $g_i$ and $g_j,$
and has its apex outside the convex hull.
We say that the apex of the $\Theta$-cone 
can ``see'' the polygonal chain of $\ch(G)$ from $g_i$ to $g_j.$
Such a pair $(g_i,g_j)$ will always have the property,
that the lines supporting the convex hull edges
$(g_j,g_\su{j})$ and $(g_\pr{i},g_i)$
have an angle of intersection not greater than $\Theta,$
and that the lines supporting
$(g_i,g_\su{i})$ and $(g_\pr{j},g_j)$
have an angle of intersection greater than  $\Theta$. 
The sequence of 
all these pairs $(g_i,g_j)$ and the corresponding
circular arcs
$C_{g_j,g_i}^{2\pi-\Theta},$
that are defined by $g_i,g_j,$ and the apex of the $\Theta$-cone
that is tangent to $g_i$ and $g_j,$
can be computed by a cyclic scan over the sequence $G'.$
The arc end points of the $\Theta$-region boundary
can be computed as the intersection points of each circular arc
$C_{g_j,g_i}^{2\pi-\Theta}$
with the supporting lines through
$(g_j,g_\su{j})$ and $(g_\pr{i},g_i).$
Consequently the $\Theta$-region has the same complexity than the convex hull.
The running time of the algorithm is dominated by the convex hull construction
in $O(n\log n)$ time.
We summarize the above in the following Lemma.
\begin{lemma}
The boundary of the  $\Theta$-region for $\Theta > \pi$ can be computed in
time $O(n\log n)$ and its complexity is $|\ch(G)|.$
\end{lemma}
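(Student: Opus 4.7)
The plan is to show that the boundary of the $\Theta$-region, for $\Theta>\pi$, is an outer chain of circular arcs naturally indexed by pairs of convex hull vertices, and that both the arcs and the indexing pairs can be extracted from $\ch(G)$ in linear time once the hull is known.

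First I would compute the clockwise hull $G'=(g_1,\ldots,g_k)$ of $G$ in the standard $O(n\log n)$ time. Since every point in the interior of $\ch(G)$ is already $\Theta$-guarded for $\Theta>\pi$, the $\Theta$-region contains $\ch(G)$ and its boundary lies outside $\ch(G)$. Moreover, any point $p$ on the boundary must be the apex of a maximal empty $\Theta$-cone whose two bounding rays are both tangent to $\ch(G)$; by convexity of $\ch(G)$ each tangency occurs at a hull vertex, giving a unique ordered pair $(g_i,g_j)\in G'\times G'$.

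Next I would identify, for a fixed tangent pair $(g_i,g_j)$, the locus of admissible apex positions. The two bounding rays enclose an apex angle of $\Theta$, so the reflex side containing $\ch(G)$ has angular measure $2\pi-\Theta<\pi$. Since $g_i$ and $g_j$ lie on these rays, $\angle g_ipg_j=2\pi-\Theta$ measured on the hull side, and the inscribed-angle property recalled above forces $p$ to lie on the arc $C_{g_j,g_i}^{2\pi-\Theta}$. A pair $(g_i,g_j)$ contributes an actual sub-arc of the boundary exactly when an empty $\Theta$-cone can be placed tangent to both vertices with apex outside $\ch(G)$, which is captured by the two stated inequalities on the supporting lines through the adjacent hull edges: the pair at the \emph{far} side of the cone must leave enough angular room ($\le\Theta$) while the pair at the \emph{near} side must not yet be the tangent pair ($>\Theta$).

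Then I would enumerate all contributing pairs by a single cyclic two-pointer scan over $G'$: starting from any valid pair, I advance $i$ (resp.\ $j$) whenever the edge $(g_i,g_{\su{i}})$ (resp.\ $(g_{\pr{j}},g_j)$) aligns with the corresponding ray of the rolling cone. Since each of $i$ and $j$ advances at most $k$ times before returning to the start, the scan runs in $O(k)$ time and outputs $|\ch(G)|$ arcs in cyclic order. For each pair, the two endpoints of its arc are obtained in constant time as the intersections of $C_{g_j,g_i}^{2\pi-\Theta}$ with the supporting lines through $(g_j,g_{\su{j}})$ and $(g_{\pr{i}},g_i)$, and by construction the endpoint shared with the next pair coincides, so the arcs glue to the full boundary. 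The main obstacle is the geometric bookkeeping in the scan: one has to verify that at every transition exactly one of the two advancing rules fires, that consecutive arcs share their common endpoint, and that every contributing pair is visited exactly once; granted this, the complexity bound $|\ch(G)|$ and the overall $O(n\log n)$ running time, dominated by the convex hull construction, follow immediately.
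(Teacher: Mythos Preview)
Your proposal is correct and follows essentially the same approach as the paper: compute the convex hull, identify the tangent pairs $(g_i,g_j)$ by a cyclic scan over the hull vertices using the stated angle conditions on adjacent supporting lines, output the corresponding arcs $C_{g_j,g_i}^{2\pi-\Theta}$, and clip each arc at its intersections with the neighbouring supporting lines. The paper presents this more tersely as the discussion preceding the lemma, but the construction, the inscribed-angle justification, the cyclic scan, and the $O(n\log n)$ running time dominated by the hull computation are all the same.
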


\section{The Boundary of the $\Theta$-Region}\label{s:boundary}
From now on we assume that the angle is $0<\Theta<\pi$.
Here we give a mathematical description of the $\Theta$-region.
First we come back to the inscribed angles and
explain its meaning for our setting.
Let $e=(l,r)\in G\times G$ be any pair of guards.
Then the set of points where we can place the apex
of an empty $\Theta$-cone passing through
the line segment $(l,r)$ in the same direction is
bounded by the circular arc,
incident to $l$ and $r$ having
inscribed angles $\Theta$, and its chord $lr$.
We denote this closed circular segment with
$D_{l,r}^\Theta$ (or $D_e$ for short) and
its bounding circular arc, as above, with $C_{l,r}^\Theta$
(or $C_e$ for short).
Because of the orientation, the circular segment is described uniquely.

The construction of the $\Theta$-region is motivated by the idea
of locally removing sets $T_i$ of unguarded points from
the convex hull $\ch(G)$
such that the remaining part matches the $\Theta$-region
(see Figure~\ref{fig:shaperegions}, middle), i.e.\ we aim for
\begin{eqnarray}\label{for:uniontunnels}
\Theta\textrm{-region} = \ch(G) \setminus \left( \bigcup_{i\in I} T_i \right)
\end{eqnarray}
for specific sets $T_i$.
Next we give the construction for the sets $T_i.$
Consider any empty $\Theta$-cone $c$ that has at least a guard
on each ray (see Figure~\ref{fig:tunnel}, left).
\begin{figure}[t]
  \centering
  \includegraphics[width=.8\columnwidth]{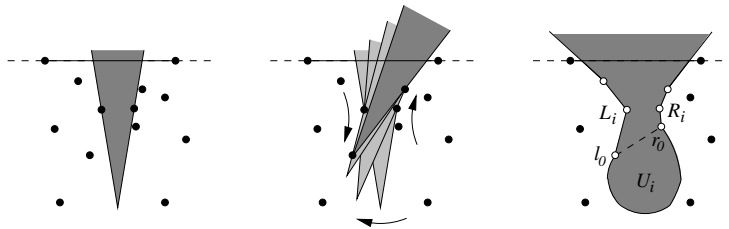}
  \caption{The construction of a tunnel.}
  \label{fig:tunnel}
\end{figure}
First we turn the cone clockwise while pushing the cone
towards the point set, such that it always stays empty but touches a guard
on each boundary (see Figure~\ref{fig:tunnel}, middle).
We end this motion when the apex of the cone reaches the position of a guard, say $l_0.$
Afterwards we start again with cone $c$, i.e.\ in the original position,
and rotate the cone in a similar way counterclockwise
until the apex reaches the position of another guard, say $r_0.$
We extend our notions.
With $L_i$ (resp.~$R_i$) we denote the set of guards
that are incident to the left (resp.~right) ray of a cone
during the construction
(the white points in Figure~\ref{fig:tunnel}, right).
We call the closure of the union of all cones,
which are used during the construction,
the \emph{tunnel $T_i$ with respect to $L_i$ and $R_i$},
or tunnel for short
(shaded region in Figure~\ref{fig:tunnel}, right).
Note that the index set $I$ in Formula~(\ref{for:uniontunnels})
enumerates over all tunnels.

Note that Formula~(\ref{for:uniontunnels})
describes the $\Theta$-region of $G,$
because each empty $\Theta$-cone that intersects $\ch(G)$
lies in at least one tunnel $T_i$:
Let $c$ be such a cone.
We can identify a tunnel by moving $c$
in the direction of its medial axis
until one of its rays is tangent to a guard.
Then we let the empty cone slide along that point without rotation
until the second ray is also tangent to a guard.
According to our construction there is a tunnel that contains this cone
and hence the cone $c$ in its original position.

No point in $T_i$ is $\Theta$-guarded,
but only its boundary can contribute to the boundary of the $\Theta$-region.
First we consider its straight-line boundaries.
Since $\Theta<\pi$,
each point of a straight-line boundary can be crossed infinitesimally
by an empty $\Theta$-cone.
That means,
there are open neighborhoods of unguarded points around
each point of a straight-line boundary,
and hence they can not contribute to the
$\Theta$-region boundary.
Points beyond the straight-line boundaries belong to different tunnels
and will be processed independent from $T_i.$

Therefore we only have to consider the curved boundary of $T_i.$
Observe that during the construction,
the apex of the rotating cone is drawing a \emph{sequence of circular arcs} 
between $l_0$ and $r_0$ which we will formalize next.
We define the set
  \begin{eqnarray}\label{bigintersection}
  U_i \,:= \bigcap_{(l,r)\in L_i\times R_i} D_{l,r}^\Theta
  \end{eqnarray}
as the intersection of all circular segments
for guard pairs in $L_i\times R_i.$
In the following Lemma we state
that we can derive the curved boundary of $T_i$
from these circular segments.
Let $h_i$ be the closed half plane
which is bounded by the line through $l_0$ and $r_0$
and contains the sequence of arcs.

\begin{lemma}\label{lem:tunnel}
  Let $T_i,$ $U_i,$ and $h_i$ be as defined above.
  Then $T_i \cap h_i = U_i.$
\end{lemma}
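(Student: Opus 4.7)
My plan is to prove both inclusions separately, relying on the inscribed-angle characterization that $p\in D_{l,r}^\Theta$ iff $p$ lies on the appropriate side of the chord through $l$ and $r$ with $\angle l p r\ge\Theta$, combined with the structural fact that the apex trajectory traced during the sweep consists of arcs $C_{l(t),r(t)}^\Theta$ where $(l(t),r(t))\in L_i\times R_i$ at every sweep time $t$. Geometrically, both $T_i\cap h_i$ and $U_i$ should coincide with the lens-shaped region bounded by the apex trajectory on the curved side and by the chord through $l_0$ and $r_0$ on the straight side.

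For the inclusion $T_i\cap h_i\subseteq U_i$, I take $p\in T_i\cap h_i$, so $p$ lies in the closure of some empty cone $c(t)$ from the construction. Fix an arbitrary pair $(l,r)\in L_i\times R_i$; the goal is to show $\angle l p r\ge\Theta$ with $p$ on the correct side of the line through $l$ and $r$. The key geometric claim is that the global emptiness invariant of the sweep forces every $l'\in L_i$ to lie in the closed half-plane ``left of'' $c(t)$ and every $r'\in R_i$ in the closed half-plane ``right of'' $c(t)$, at every sweep time $t$ and not merely at the moment each guard is touched. Granting this, the cone opening separates $l$ from $r$, and the restriction $p\in h_i$ keeps $p$ on the apex side of the chord through $l_0,r_0$, which prevents $p$ from sitting beyond $l$ or $r$; a direct angle computation then delivers $\angle l p r\ge\Theta$ with the correct orientation.

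For the inclusion $U_i\subseteq T_i\cap h_i$, I take $p\in U_i$. Then $p\in h_i$ is immediate from $p\in D_{l_0,r_0}^\Theta$, since the half-plane $h_i$ contains the circular segment $D_{l_0,r_0}^\Theta$. For $p\in T_i$ I would use an intermediate-value argument along the sweep parameter $t$, where the apex moves continuously from $l_0$ to $r_0$. Consider the signed angle $\phi(t)$ between the axis of $c(t)$ and the vector $p-a(t)$; the hypothesis $p\in D_{l(t),r(t)}^\Theta$ at every $t$ provides control on $\phi(t)$ through the $\Theta$-bound on $\angle l(t) p r(t)$, and at the two endpoint configurations the cone orientation forces $\phi$ to change sign, so by continuity $|\phi(t^\ast)|\le\Theta/2$ at some intermediate $t^\ast$, placing $p\in c(t^\ast)\subseteq T_i$.

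The main obstacle is the structural claim used in the first inclusion: establishing that every $l\in L_i$ remains in the ``left'' half-plane of every $c(t)$, and not merely on the left ray at the single instant $t_l$ when it is touched. This is a global invariant of the sweep rather than a consequence of single-cone emptiness, and I would prove it by contradiction using continuity of the motion: if $l$ were on the right side of $c(t)$ for some $t$ while being on the left ray at time $t_l$, then by continuous motion the boundary of the cone would have to sweep across $l$ at some intermediate moment, placing $l$ in the cone's interior and contradicting the always-empty invariant of the construction.
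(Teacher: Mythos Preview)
Your plan is sound and, for the inclusion $T_i\cap h_i\subseteq U_i$, follows the same line as the paper: once one knows that every $l\in L_i$ lies to the left and every $r\in R_i$ to the right of an empty $\Theta$-cone at $p$, the inequality $\angle lpr\ge\Theta$ and hence $p\in D_{l,r}^\Theta$ are immediate. The paper simply asserts this separation as part of what it means for the cone to pass ``through tunnel $T_i$''; your explicit continuity argument for the left/right invariant is a legitimate way to fill in what the paper leaves implicit.

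The genuine difference is in the direction $U_i\subseteq T_i\cap h_i$. You parametrize by the sweep time and run an intermediate--value argument on the signed angle $\phi(t)$ between the cone axis and $p-a(t)$. The paper instead argues by contrapositive and parametrizes, in effect, by the \emph{direction of a cone at $p$}: if no empty $\Theta$-cone with apex $p$ fits through the tunnel, then the maximal empty wedge at $p$ pointing into the tunnel has aperture $<\Theta$, and its two bounding rays necessarily touch some $l\in L_i$ and some $r\in R_i$, giving $\angle lpr<\Theta$ and hence $p\notin D_{l,r}^\Theta$. This contrapositive is a two-line argument and sidesteps the need to track the moving apex $a(t)$ altogether. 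Your IVT route works, but the link you sketch between the hypothesis $p\in D_{l(t),r(t)}^\Theta$ and the bound on $\phi(t)$ is not entirely transparent (one has to relate the angle $\angle l(t)\,p\,r(t)$, measured at $p$, to the position of $p$ relative to the cone at $a(t)$), so the paper's version is both shorter and cleaner here.
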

\begin{proof}
{\bf (Superset.)}
Let $p\in U_i.$
Assume there is no empty cone with apex $p$ through tunnel $T_i.$
This means that there is at least a pair $(l,r)\in L_i\times R_i$
with the property that $\angle l p r<\Theta$.
Hence $p\not\in D_{l,r}^\Theta$ which is a contradiction.
{\bf (Subset.)} Let $p$ be the apex of an empty $\Theta$-cone
through tunnel $T_i.$
This means that $\angle l p r\ge\Theta$ for all $(l,r)\in L_i\times R_i,$
and hence $p$ lies in all corresponding circular segments $D_{l,r}^\Theta$.
\qed
\end{proof}
It follows that the $\Theta$-region boundary is
contained in the curved boundary of the union of the sets $U_i,$ i.e.\
\begin{eqnarray}\label{f:complexity}
  \partial\Theta\textrm{-region}
  \;\subseteq\;
  \partial
  \bigcup_{i\in I} U_i
  \; = \;
  \partial
  \bigcup_{i\in I}
  \left( \bigcap_{(l,r)\in L_i\times R_i} D_{l,r}^\Theta \right),
\end{eqnarray}
where $i$ enumerates over all tunnels.
We observe
that the intersections of
$|L_i|\cdot |R_i|$ circular segments $D_{l,r}$
in Formulae~(\ref{bigintersection})
and (\ref{f:complexity})
are too pessimistic.
During the construction of a tunnel
we collect all guard pairs $(l,r)\subset L_i\times R_i,$
that are incident to the rotating cone simultaneously, in $E_i$.
Since the touching point of $L_i$ (resp.~$R_i$) can only
change in one direction to its neighbor in the sequence of $L_i$ (resp.~$R_i$),
that leads to a set $E_i$ of size $|L_i|+|R_i|-1$.
Therefore we may reduce the intersection of the circular segments
in Formula (\ref{bigintersection}) to
\begin{eqnarray*}\label{f:ui-reduced}
U_i:= \bigcap_{(l,r)\in E_i} D_{l,r} \cap h_i
\end{eqnarray*}
for which Lemma~\ref{lem:tunnel} is still valid.
With ${\cal C}$ we denote the set of all circular arcs
that appear in the boundary of a set $U_i$.

\section{Upper Bounds on the Worst-Case Complexity}\label{s:upperbounds}
Now we discuss the worst-case complexity of the $\Theta$-region
and state the asymptotic bounds in the number $n$ of guards and
the reciprocal value of the angle, i.e.\ $\frac{1}{\Theta}.$
During the analysis of the complexity,
we distinguish cases according to the value of $\Theta.$
We already know that the $0$-region is the empty set.
Since $G$ is a discrete set the $\Theta$-region is also the empty set
for values close to $0.$
\begin{lemma}\label{l:minimaltheta}
The $\Theta$-region for $\Theta\le\frac{2\pi}{n}$ is the empty set.
\end{lemma}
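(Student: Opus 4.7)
The plan is to prove the contrapositive: for every point $p\in\R^2$, I will exhibit an empty open $\Theta$-cone with apex at $p$, which shows that no $p$ is $\Theta$-guarded. The main tool is a pigeonhole argument on the angular directions from $p$ to the guards.

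First, fix an arbitrary $p\in\R^2$ and consider the guard set $G' := G\setminus\{p\}$, so $|G'|\le n$. Each guard $g\in G'$ defines a direction angle $\theta_g\in[0,2\pi)$ from $p$. Sorting these (at most $n$) direction angles cyclically around $p$ partitions the unit circle of directions at $p$ into at most $|G'|\le n$ closed angular arcs (\emph{gaps}), whose sizes sum to $2\pi$. By the pigeonhole principle, the largest gap has angular measure at least $\frac{2\pi}{n}\ge\Theta$.

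Next, let $[\alpha,\beta]$ be such a largest gap, so $\beta-\alpha\ge\Theta$, and consider the open $\Theta$-cone $c$ with apex at $p$ whose two bounding rays span the angular interval $(\alpha,\alpha+\Theta)\subseteq(\alpha,\beta)$. Any guard $g\in G'$ has direction $\theta_g\not\in(\alpha,\beta)$ by construction of the gap, hence $\theta_g\not\in(\alpha,\alpha+\Theta)$, so $g\notin c$ (and $p\notin c$ as well, since the apex is excluded from the open cone). Therefore $c$ is empty, which proves that $p$ is not $\Theta$-guarded. Since $p$ was arbitrary, the $\Theta$-region is empty.

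The only subtle point, and the place where the inequality $\Theta\le\frac{2\pi}{n}$ is used sharply, is the boundary case when the maximum gap is exactly $\Theta=\frac{2\pi}{n}$: then the two bounding rays of the cone pass through the two neighboring guard directions, and it is crucial that cones are \emph{open} sets (as stated in the introduction) so those boundary guards do not lie inside $c$. With that convention in place the argument is uniform and does not require a case distinction.
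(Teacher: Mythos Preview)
Your proof is correct and follows essentially the same pigeonhole argument as the paper: sort the directions from $p$ to the guards cyclically and observe that one angular gap must have size at least $\frac{2\pi}{n}\ge\Theta$, yielding an empty $\Theta$-cone. Your treatment is in fact a bit more careful than the paper's---you handle the case $p\in G$ uniformly via $G'=G\setminus\{p\}$ rather than as a separate remark, and you make explicit why the open-cone convention resolves the boundary case $\Theta=\frac{2\pi}{n}$---but the approach is identical.
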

\begin{proof}
Consider the $n$ rays emanating from a point $p\in\R^2\setminus G$
through the guards in $G.$
Then the rays form at least one empty cone with angle of at least
$\frac{2\pi}{n}$ which contains an empty $\Theta$-cone.
Hence $p$ is unguarded.
We can argue similarly for the guards $p\in G.$\qed
\end{proof}
According to the right term in Formula~(\ref{f:complexity})
the complexity of the $\Theta$-region is hidden in
an arrangement of circular arcs.
Since there are at most $O(n^2)$ different circular arcs,
two for each guard pair,
the complexity of the $\Theta$-region is trivially $O(n^4).$
Now we show that the set $\mathcal{C}$ of circular arcs
is of $O(\frac{n}{\Theta})$ size.
Hence the complexity of the $\Theta$-region is
$O(\frac{n^2}{\Theta^2}).$

\begin{theorem}\label{tm:size}
The set $\mathcal{C}$ of circular arcs,
which defines the boundary of the $\Theta$-region, is of
$O(\frac{n}{\Theta})$ size.
\end{theorem}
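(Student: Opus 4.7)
The plan is to bound $|\mathcal{C}|$ using the identity already derived in Section~\ref{s:boundary},
$|\mathcal{C}|\le\sum_i |E_i|=\sum_i(|L_i|+|R_i|-1)$,
so it suffices to prove $\sum_i(|L_i|+|R_i|)=O(n/\Theta)$. I would do this by a per-guard angular charging argument: for each fixed $g\in G$, I bound the number of tunnels $T_i$ with $g\in L_i\cup R_i$ by $O(1/\Theta)$, then sum over the $n$ guards.

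The key geometric ingredient is a \emph{contraction observation}. Suppose $g\in L_i$; then at some moment during the construction of $T_i$ there is an empty $\Theta$-cone with apex $p$ and orientation $\phi$ having $g$ on its left ray. Because the cone is a convex sector of opening $\Theta<\pi$ with $g$ on its boundary, translating the apex along the left ray from $p$ down to $g$ yields a sub-cone of the original (and hence still empty). In the limit, one obtains an \emph{open} empty $\Theta$-sector at $g$ of angular range $(\phi-\Theta/2,\phi+\Theta/2)$; the analogous statement holds if $g\in R_i$. I would verify this by a direct vector computation: for any $q$ in the cone at $g$, writing $q-p=(g-p)+(q-g)$ shows that its direction is a strict convex combination of $\phi+\Theta/2$ and some $\alpha\in(\phi-\Theta/2,\phi+\Theta/2)$, hence lies strictly inside $(\phi-\Theta/2,\phi+\Theta/2)$, so $q$ lies in the original empty cone.

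Consequently, every appearance of $g$ in some $L_i\cup R_i$ can be charged to one of the \emph{maximal} empty angular sectors at $g$ of size at least $\Theta$. Such maximal sectors are pairwise disjoint open sub-intervals of $[0,2\pi)$ each of length at least $\Theta$, so there are at most $\lfloor 2\pi/\Theta\rfloor=O(1/\Theta)$ of them. The remaining obstacle, which I expect to be the hardest part, is the \emph{injectivity} of this charging: different tunnels $T_i\neq T_j$ with $g$ on the boundary must be charged to different maximal sectors at $g$. If both were charged to the same maximal sector $S$, I would build a continuous family of empty $\Theta$-cones joining the two tunnel configurations by first contracting the apex of $T_i$'s cone down to $g$, rotating the cone's orientation continuously through $S$ (it stays empty since $S$ is guard-free), and then re-inflating the apex along a ray reaching $T_j$'s configuration. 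A continuous family of empty cones would contradict $T_i$ and $T_j$ being distinct tunnels, since tunnels are maximal families of empty cones under the construction process.

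Summing the per-guard bound over all $g\in G$ gives $\sum_i(|L_i|+|R_i|)=O(n/\Theta)$, and therefore $|\mathcal{C}|=O(n/\Theta)$. The contraction observation is a short vector calculation; the deformation/injectivity step for the charging is the principal difficulty, requiring a careful reformulation of tunnels as connected components of empty cones in an appropriate parameter space.
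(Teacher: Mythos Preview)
Your plan diverges from the paper's proof, and the divergence is exactly where the gap sits.

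The paper does \emph{not} bound $\sum_i(|L_i|+|R_i|)$ directly. It counts arc \emph{endpoints}: an interior endpoint of $\partial U_i$ is a configuration where one ray of an empty $\Theta$-cone carries two guards simultaneously. Charging such a configuration to the guard $l_k$ that is \emph{closer to the apex}, one observes that the sub-cone beyond the farther guard is an empty $\Theta$-sector whose left boundary passes through $l_k$; two such configurations with the same $l_k$ yield sectors whose interiors must be disjoint (otherwise the far guard of one lies strictly inside the other empty sector). Disjoint $\Theta$-sectors around a full angle give at most $\lfloor 2\pi/\Theta\rfloor$ charges to $l_k$, hence $O(n/\Theta)$ endpoints in total. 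No notion of ``same tunnel'' is needed.

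Your charging to maximal empty sectors at $g$ is a different mechanism, and the injectivity step is not established. The deformation you propose---contract the apex to $g$, rotate within the sector $S$, re-inflate---does not live in the tunnel configuration space: tunnels are, by the paper's construction, one-parameter families of empty $\Theta$-cones that \emph{touch a guard on each ray}, and the construction terminates precisely when the apex reaches a guard. A cone with apex at $g$ is therefore a \emph{boundary} object of that space, not an interior point of a tunnel, so a homotopy through such cones says nothing about whether $T_i$ and $T_j$ are the same connected component. Concretely, nothing prevents the following: starting from the $T_i$ configuration with $g$ on the left ray and rotating toward $\phi_j$, the apex may hit some guard $g'$ (ending $T_i$) at an intermediate orientation, after which a genuinely different tunnel $T_j$ continues, still with $g$ on its left ray and still inside the same maximal sector $S$. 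Your deformation passes ``around'' this obstruction via $g$, but the tunnel process cannot.

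So either you must prove that the apex can never hit a guard between $\phi_i$ and $\phi_j$ while $g$ remains the binding left guard (which is essentially the paper's endpoint analysis in disguise), or you must redefine tunnels as connected components in a larger space that legitimately contains apex-at-guard cones and then re-derive the $|E_i|=|L_i|+|R_i|-1$ identity there. As written, the injectivity is asserted rather than shown, and the sketched homotopy does not close the gap.
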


\begin{proof}
Instead of counting the arcs directly
we count their end points.
Let $p$ be an arc end point of a tunnel
as shown in Figure~\ref{fig:algo}, left.
\begin{figure}[t]
  \centering
  \includegraphics[width=.8\columnwidth]{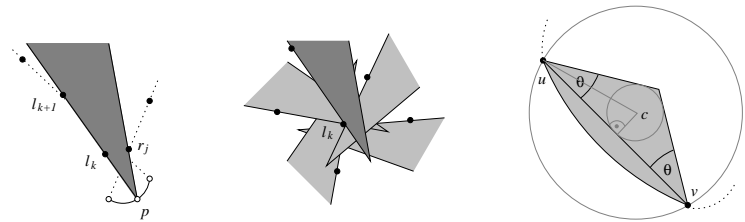}
  \caption{An end point $p$ of a circular arc
           in the boundary of a tunnel (left).
	   A situation that is described in the proof of Theorem~\ref{tm:size}
	   (middle).
	   The auxiliary construction that is described in the proof
	   of Theorem~\ref{t:almostlinear} (right).}
  \label{fig:algo}
\end{figure}
In this position
a ray of the rotating cone is incident to two guards at once.
We assume without loss of generality that two guards lie on the left ray.
We focus on the guard $l_k$ that is closer to the apex and
count how often a guard can be in this situation.
Clearly the number is bounded by $n-1$
because there are no more other guards.
On the other hand we observe
that the empty $\Theta$-cones in this situation
can not intersect each other beyond the second guard on the left ray
(see Figure~\ref{fig:algo}, middle).
Hence there can be at most $\lfloor \frac{2\pi}{\Theta} \rfloor$
different such cones.
With the same argumentation for the right ray
we bound the number of arc end points per guard by
$2\lfloor \frac{2\pi}{\Theta} \rfloor$
and hence the total number of arc end points by
$O(\frac{n}{\Theta}).$\qed
\end{proof}
From the last Theorem we can derive, that
  if the angle $\Theta>\delta$
  is bounded by a constant $\delta>0,$
  then the number of arcs in ${\cal C}$ is $O(n)$
  and the complexity of the $\Theta$-region is $O(n^2).$
With an auxiliary construction we can even further improve this result.
\begin{theorem}\label{t:almostlinear}
  If the angle $\Theta>\delta$
  is bounded by a constant $\delta>0,$
  then the complexity of the $\Theta$-region
  is $O(n^{1+\epsilon})$, for any $\epsilon > 0$.
\end{theorem}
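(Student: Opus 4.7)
The plan is to sharpen the trivial $O(n^2)$ arrangement bound through an auxiliary geometric construction combined with a divide-and-conquer argument in the spirit of Matousek's Partitioning Theorem (which the paper itself plans to invoke again in Section~\ref{s:algorithm}).

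By Theorem~\ref{tm:size}, under the hypothesis $\Theta > \delta$, the arc set $\mathcal{C}$ has size $O(n/\Theta) = O(n)$, and the boundary of the $\Theta$-region lies in the arrangement of these arcs. Since any two circles meet in at most two points, the trivial bound on the complexity of this arrangement is $O(n^2)$. My goal is to sharpen it to $O(n^{1+\epsilon})$ for any $\epsilon>0$.

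First I would set up the auxiliary construction indicated in Figure~\ref{fig:algo} (right): to each arc $C_{l,r}^\Theta \in \mathcal{C}$ I would attach a bounded-complexity \emph{locality region} $R_{l,r}$, chosen so that only the portion of the arc lying inside $R_{l,r}$ can appear on $\partial(\Theta\textrm{-region})$. Since $\Theta \ge \delta$ is bounded away from zero, the radius $|lr|/(2\sin\Theta)$ of the supporting circle of the arc is comparable to $|lr|$, and $R_{l,r}$ can be taken of bounded aspect ratio, controlled by $\delta$.

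Next I would invoke Matousek's Partitioning Theorem with parameter $r$ on the $n$ guards (and the induced locality regions) to obtain a simplicial partition of the plane of size $O(r)$ in which each simplex $\Delta$ is crossed by at most $O(n/\sqrt{r})$ locality regions. By the locality property, an arc whose region $R_{l,r}$ does not meet $\Delta$ contributes nothing to the $\Theta$-region boundary inside $\Delta$; the sub-problem inside $\Delta$ involves only $O(n/\sqrt{r})$ arcs. Denoting the worst-case complexity by $f(n)$, this yields a recurrence
\[
f(n) \;\le\; O(r)\cdot f\!\left(n/\sqrt{r}\right)\;+\;O(n),
\]
which, for the choice $r=n^{2\epsilon'}$ with small $\epsilon'>0$, solves to $f(n)=O(n^{1+\epsilon})$ for any desired $\epsilon>0$.

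The main obstacle is the locality step. One must show rigorously that an arc whose locality region is disjoint from a simplex $\Delta$ cannot contribute to the $\Theta$-region boundary inside $\Delta$. This is exactly where the bound $\Theta > \delta$ is essential: for vanishingly small $\Theta$ the arcs degenerate (their radii of curvature blow up), so no bounded-size witness region suffices and the recursion breaks down—in accordance with the $\Omega(n^2)$ lower bound obtained in Section~\ref{s:lowerbound} for $\Theta = \Theta(1/n)$. Once the locality claim is established, the remainder of the proof is a routine Matousek-style induction on $n$.
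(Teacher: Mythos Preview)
Your proposal contains a genuine gap: the recurrence you write down does \emph{not} solve to $O(n^{1+\epsilon})$. With $a=r$ subproblems each of size $n/\sqrt{r}$, the Master Theorem gives $\log_{\sqrt{r}} r = 2$, so $f(n)\le O(r)\,f(n/\sqrt{r})+O(n)$ solves to $\Theta(n^2)$, which is no improvement over the trivial arrangement bound. Taking $r=n^{2\epsilon'}$ does not help: substituting $f(m)=m^\beta$ forces $\beta=2\epsilon'+\beta(1-\epsilon')$, i.e.\ $\beta=2$. In short, a partition into $O(r)$ cells each meeting $O(n/\sqrt{r})$ objects cannot beat $n^2$; to get subquadratic bounds by divide-and-conquer one needs either fewer cells or fewer objects per cell (e.g.\ a $1/r$-cutting with $O(r^2)$ cells each crossed by $O(n/r)$ curves, which still yields $n^2$ for a generic arrangement). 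Moreover, Matou\v{s}ek's Partition Theorem bounds the number of simplices a \emph{line} crosses; it says nothing about how many of your locality regions cross a fixed simplex, so the premise of the recurrence is also not justified.

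The paper's argument is entirely different and avoids any recursion. For every arc $a\in\mathcal{C}$ with endpoints $u,v$ one forms the circular segment $d_a$ bounded by $a$ and the chord $uv$, and glues on its flat side an isoceles triangle $t_a$ with base angle $\min\{\Theta,\pi/4\}$ at $u$ and $v$. One checks that $t_a$ lies inside the same tunnel $T_i$ (so nothing of the $\Theta$-region is destroyed by adding it), and hence the $\Theta$-region boundary is contained in the boundary of the union of the objects $F_a:=d_a\cup t_a$. Because $\Theta>\delta$, each $F_a$ is $\alpha$-fat for a constant $\alpha=\alpha(\delta)$, and any two boundaries $\partial F_a,\partial F_b$ meet in at most a constant number of points. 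Now one simply invokes the Efrat--Sharir theorem on the union of fat objects to conclude that the union boundary, and hence the $\Theta$-region, has complexity $O(n^{1+\epsilon})$. The crucial missing idea in your sketch is precisely this \emph{fatness} route; partition-tree recursion is the wrong tool here.
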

\begin{proof}
We make use of the following construction.
Let $a\in{\cal C}$ be an arc in the boundary of tunnel $T_i,$
and let $u$ and $v$ be the end points of $a.$
The line segment $(u,v)$ and $a$ are the boundary
of a circular segment, say $d_a.$
Now we clue a triangle $t_a$ at the edge $(u,v)$ of $d_a,$
which has an angle of $\min\{\Theta,\frac{\pi}{4}\}$ at $u$ and $v,$
and denote this new object with $F_a:= d_a \cup t_a$
(see Figure~\ref{fig:algo}, right).
We state that the triangle $t_a$ is a subset of $T_i$:
Assume there is a guard $g\in t_a.$
Then the angles $\angle guv$ and $\angle gvu$ are smaller than $\Theta.$
Hence two empty $\Theta$-cones with apexes $u$ and $v$
would belong to different tunnels
what is a contradiction to $a\subset\partial T_i.$
Furthermore, because of the angle at $u$ and $v$
the set of empty $\Theta$-cones with apexes at points in $a$
have to cover $t_a,$
what completes the proof of the statement.

We repeat the above construction for each arc $a\in{\cal C}$
and collect the new objects $F_a$ in the set ${\cal F}.$
We repeat the definition of $\alpha$-fatness from Efrat~et al.~\cite{efrat97}:
\emph{An object $F$ is \emph{$\alpha$-fat} for some fixed $\alpha>1,$
if there exist two concentric disks ${\cal D}\subseteq F \subseteq {\cal D'}$
such that the ratio $\frac{\rho'}{\rho}$
between the radii of ${\cal D'}$ and ${\cal D}$ is at most $\alpha.$}
We state that there is an $\alpha>1$ such that
the objects $F_a\in{\cal F}$ are $\alpha$-fat:
The worst-case scenario occurs when the arc $a$ is almost a straight-line.
Hence we concentrate on the proof that the triangle $t_a$ is $\alpha$-fat
(see again Figure~\ref{fig:algo}, right).
Remember that the angle at $u$ is $\min\{\Theta,\frac{\pi}{4}\}.$
Then the ratio between the radii of the circumcircle and
the inscribed circle is
\begin{eqnarray*}
\frac{\rho'}{\rho} =
\frac{1}{\sin(\frac{1}{2}\cdot\min\{{\Theta},\frac{\pi}{4}\})}
\le
\frac{1}{\sin(\frac{\delta}{2})}
=: \alpha,
\end{eqnarray*}
which is a constant.
Without loss of generality we assume $\delta\le\frac{\pi}{4}.$

The main Theorem in
Efrat~et al.~\cite{efrat97} states,
that the combinatorial complexity of the union of a collection ${\cal F}$
of $\alpha$-fat objects,
whose boundary intersect pairwise in at most $s$ points,
is $O(|{\cal F}|^{1+\varepsilon})$,
for any $\varepsilon>0,$
where the constant of proportionality depends on
$\varepsilon,$ $\alpha,$ and $s.$

It is already shown that $\alpha$ is a constant
and that the objects in ${\cal F}$ are $\alpha$-fat.
The boundary of each convex object $F_a\in{\cal F}$
has always three edges:
two line segments and a circular arc.
Therefore the boundary of each pair of objects in ${\cal F}$
intersect in at most $s=10$ points.
As we said above
$|{\cal C}|\in O(n),$ and hence $|{\cal F}|\in O(n),$
because $\Theta$ is bounded from below by a constant $\delta$.
Therefore the construction fulfills all preconditions
to apply the Theorem of Efrat et al.\ which completes the proof.
\qed
\end{proof}
Now we show that
the complexity of the $\Theta$-guarded region
is linear for angles $\Theta$ at least $\frac{\pi}{2}.$
\begin{theorem}\label{piovertwo}
 The  complexity of the $\Theta$-region is $O(n)$ for
 $\frac{\pi}{2}\le\Theta<\pi$.
\end{theorem}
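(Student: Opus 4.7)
The plan is to exploit the convexity that emerges in the regime $\Theta \geq \pi/2$. When the inscribed angle is at least a right angle, each arc $C_{l,r}^\Theta$ is at most a semicircle, and so each circular segment $D_{l,r}^\Theta$ is a convex region. Consequently, each set
\[
U_i \;=\; \bigcap_{(l,r)\in E_i} D_{l,r}^\Theta \,\cap\, h_i
\]
appearing in Formula~(\ref{f:complexity}) is an intersection of convex sets, hence itself convex. By Theorem~\ref{tm:size} we already know $|\mathcal{C}| = O(n/\Theta) = O(n)$ in this range, and the same charging argument (each guard can play the role of $l_0$ or $r_0$ only $O(1/\Theta)$ times) shows that the number of tunnels $T_i$, and thus of sets $U_i$, is also $O(n)$.

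Since $\partial(\Theta\text{-region}) \subseteq \partial\bigcup_i U_i$ by (\ref{f:complexity}), it suffices to bound the combinatorial complexity of the union of these $O(n)$ convex regions. I would proceed by establishing that the family $\{U_i\}$ is a collection of \emph{pseudo-disks}, i.e.\ that any two boundaries $\partial U_i$ and $\partial U_j$ cross in at most two points. Granted this, the classical union-complexity theorem of Kedem, Livne, Pach, and Sharir immediately yields $|\partial\bigcup_i U_i| = O(n)$, which bounds the complexity of the $\Theta$-region. The matching $\Omega(n)$ lower bound is trivial, for instance by placing the $n$ guards in convex position, giving a tight $\Theta(n)$ bound.

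The main obstacle is the pseudo-disk verification. For two generic convex regions bounded by several circular arcs the boundaries can intersect many times, so the argument has to use the tunnel construction in an essential way. Concretely, any crossing of $\partial U_i$ and $\partial U_j$ corresponds to a point that serves simultaneously as the apex of two empty $\Theta$-cones, one per tunnel; a case analysis on the relative orientations of these two cones shows that a third crossing would force the rotating-cone motions that produce $T_i$ and $T_j$ to align, contradicting $i \ne j$. The convexity guaranteed by $\Theta \geq \pi/2$ is essential here, because it rules out the oscillating crossing patterns that the larger-than-semicircle segments permit in the regime $\Theta < \pi/2$ (consistent with the $\Omega(n^2)$ lower bound obtained later for small $\Theta$).
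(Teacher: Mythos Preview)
Your plan is exactly the paper's: apply the Kedem--Livne--Pach--Sharir union bound after verifying that the tunnel regions form a pseudo-disk family. Where you fall short is the verification itself, which you correctly flag as the main obstacle but do not carry out; and the sketch you give (``a third crossing would force the rotating-cone motions \ldots\ to align, contradicting $i\ne j$'') is not the mechanism that works. The actual argument is an angle count: a point $p$ on the curved boundary of $U_i$ is the apex of an empty $\Theta$-cone through tunnel $T_i$, so if $p$ also lies on the curved boundary of $U_j$ there are two empty $\Theta$-cones through distinct tunnels sharing the apex $p$, and the angle spanned at $p$ between the relevant guard pair is therefore at least $2\Theta\ge\pi$. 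It is this inequality that renders the configurations forced by three or more crossings geometrically impossible, and it is precisely where the hypothesis $\Theta\ge\pi/2$ enters; your sketch does not mention it.

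There is also a technical device in the paper that your proposal omits. The paper does not apply Kedem et~al.\ to $\partial U_i$ directly; it closes the arc sequence from $l_0$ to $r_0$ with an auxiliary \emph{half-circle} $C_{r_0,l_0}^{\pi/2}$ (not the chord on $\partial h_i$) to obtain a Jordan curve $J_i$. Because $\Theta\ge\pi/2$, this half-circle lies inside the tunnel $T_i$, hence is guard-free, and the resulting $J_i$ bounds a convex region contained in the disk supporting the half-circle. These properties are what drive Cases~2 and~3 of the case analysis, which handle crossings between the arc sequence of one curve and the closing piece of the other. If you keep the straight chord $l_0r_0$ as the closure of $U_i$, that part of the argument must be redone, and it is not immediate that it goes through unchanged.
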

\begin{proof}
Let ${\cal J}$ be a set of $m$ Jordan curves, i.e.\ simply-closed curves.
Kedem et al.~\cite{kedem90} proved that if any two 
curves in ${\cal J}$ intersect in at most two points
then the complexity of their union is $O(m)$. 

For each set $U_i$ we define a Jordan curve $J_i$.
Let $J_i$ be the curved boundary of $U_i$ from $l_0$ to $r_0$
connected with an auxiliary half circle $C_{r_0,l_0}^\frac{\pi}{2}$, i.e.\
\begin{eqnarray*}
  J_i \,:= \partial
     \left( \bigcap_{(l,r)\in E_i} D_{l,r}^\Theta \cap h_i \right)
     \cup C_{r_0,l_0}^{\frac{\pi}{2}}.
\end{eqnarray*}
Note that the auxiliary half-circle lies in $T_i$
because $\Theta$ is obtuse.
Note further that $J_i$ is the boundary of a convex region
and that $J_i$ lies inside the circle that supports $C_{r_0,l_0}^\frac{\pi}{2}$.
We repeat this construction for all tunnels $T_i$, with $i\in I$,
and collect all curves $J_i$ in ${\cal J}$.
We state that any two curves in ${\cal J}$ intersect at most twice.
Now assume that there are two curves $J_i, J_j \in {\cal J}$
which intersect in more than two points.
We distinguish the following cases as they are shown in Figure~\ref{fig:cases}. 
\begin{figure}[t]
  \centering
  \includegraphics[width=.8\columnwidth]{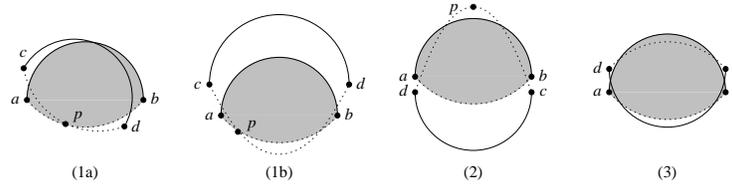}
  \caption{(Proof of Theorem~\ref{piovertwo}.)
    The cases that could imply more
    than just two intersection points between the curves $J_i$ and $J_j$.
    The dotted parts are the sequence of 
    arcs of $U_{i}$ and $U_{j}$ from $a$ to $b$ and from $c$ to $d$,
    respectively,
    and the solid arcs are the auxiliary half-circles.}
  \label{fig:cases}
\end{figure}
Let $A_{i}$ (resp. $A_{j}$) denote the sequence of circular arcs
of $U_i$ (resp. $U_j$). 

In Case~1,
the sequence of circular arcs $A_{i}$ and $A_{j}$ intersect in point $p$
as is shown in Pictures~(1a) or (1b).
That means that for each tunnel $T_i$ and $T_j$
there exists an empty $\Theta$-cone with apex in $p$.
Therefore the angle $\angle apb$ has to be at least $2\Theta$
which is at least $\pi$.
That is a geometrical contradiction.

In Case~2,
we consider a point $p$ that lies on the sequence of arcs $A_{j}$ 
outside $J_i$ as shown in Picture~(2). The angle $\angle cpd$
is at least $\Theta$.
By construction the angle $\angle bpa$
is larger than $\angle cpd$ and hence $\angle apb > \Theta$.
It is a contradiction that $p$ does not lie inside $J_i$.

In Case~3, we consider an empty $\Theta$-cone with apex $c$
through tunnel $T_j$.
Assume this cone passes between $a$ and $b$.
Then the angle $\angle bca$ is at least $\Theta$
and hence $c$ has to lie inside $J_i$. This is a contradiction.
In case that the empty $\Theta$-cone $c$ does not pass between $a$ and $b$,
but $b$ and $c$, or $a$ and $d$,
similar geometric contradictions can be shown.

Other cases are excluded since no guard can lie inside $J_i$ or $J_j$.
This completes the proof.\qed
\end{proof}

\section{Lower Bound on the Worst-Case Complexity}\label{s:lowerbound}
In the following we show that there is a sequence of inputs
such that the asymptotic bound on the complexity of their
$\Theta$-guarded region is $\Omega(n^2)$.
For this purpose
we give a generic construction for point sets $G_i$ with $n_i$ guards
and angles $\Theta_i$ for all $i\in\N$,
such that the complexity of the $\Theta_i$-region of the point set $G_i$
is lower bounded by $c\cdot n_i^2$
for some constant $c$ and $\lim_{i\to\infty}{n_i}=\infty.$
In fact $n_i$ is a linear function in $i,$
and $\Theta_i$ is of order $\frac{1}{i}.$
Therefore the complexity bound
can also be interpreted as $\Omega(\frac{n}{\Theta})$.

First we motivate the construction for a given $i\in\N.$
To achieve the desired complexity,
we construct the point set $G_i$ in such a way that
the $\Theta_i$-region is fragmented into $c\cdot n_i^2$ connected components,
each of constant complexity.
Figure~\ref{fig:squared1} illustrates the idea
of the construction.
\begin{figure}[t]
  \centering
  \includegraphics[width=.95\columnwidth]{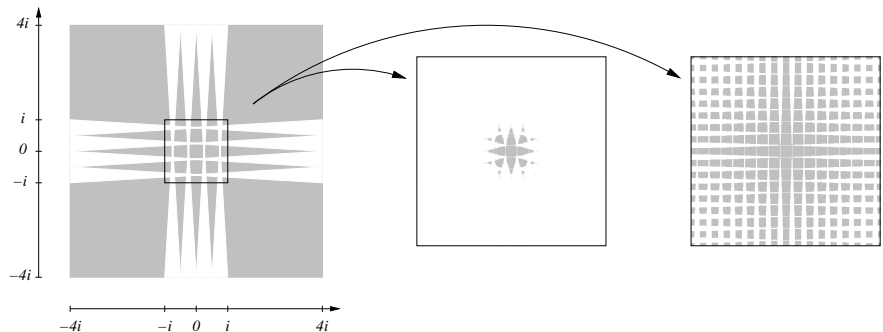}
  \caption{The location of the fragmented $\Theta_i$-region
    is located at the centre of the convex hull (left, $i=2$).
    Some connected components of the center are erased by
    unwanted tunnels (middle, $i=8$).
    All connected components of the center are protected
    against unwanted tunnels (right, $i=8$).}
  \label{fig:squared1}
\end{figure}
The area, where the $\Theta_i$-region is highly fragmented
is at the center of the convex hull.
The decomposition is forced by long, thin tunnels
that enter this area `axis parallel'
from above, below, left, and right;
more precisely
the medial axis of the cones that enter these tunnels deepest
are parallel to the principal axes.
In the \emph{first step} of the construction
we determine the %
tunnels that force the fragmentation and
implicitly determine the area
(bounded by the box in Figure~\ref{fig:squared1})
that contains these %
connected component.
Unfortunately the same guards, that define these tunnels,
define an even larger number of unwanted tunnels which can
enter this area as well.
Therefore we have to place additional guards
in the \emph{second step} with the intention to prevent unwanted tunnels
from entering this area, because they could erase some of
the connected components, hence reducing the total complexity
(see Figure~\ref{fig:squared1}, middle and right).
We show how to place a linear number of additional guards as obstacles
in the plane to keep out a squared number of tunnels from this area.
We note that
because of the construction in the second step the convex hull can be huge
compared to the box in which we count the connected components.
For simplicity reasons we disregard the shape of the $\Theta_i$-region
outside this box.
The construction details are given below.

\subsection{First step: To determine the \emph{wanted} tunnels.}
We denote the square of edge length $2i$ that
is centered at the origin and
is oriented parallel to the principal axes with $B_i.$
In this step guards are placed on the boundary
of the boxes $B_{4i}$ and $B_{2i}.$
The area, in which we will count the connected components, is $B_{i}$
(see Figure~\ref{fig:squared3}, left).
\begin{figure}[t]
  \centering
  \includegraphics[width=.85\columnwidth]{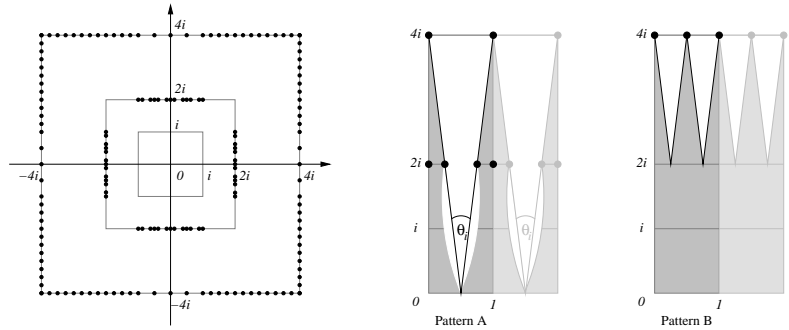}
  \caption{Placement of the guards in the first step for $i=2$ (left).
    Guard patterns A and B (right).}
  \label{fig:squared3}
\end{figure}
The entire construction is symmetric to the origin
as well as to the principal axes.
For this reason we only give the construction for
the upper half of box $B_{4i}$;
the constructions for the lower, left, and right half
of this box are done analogously.

Now we introduce the guard patterns A and B
(see Figure~\ref{fig:squared3}, right),
that define two ways to place guards
inside a cell of width 1 and height $4i$,
which we will use later on to stamp the upper half of the box with.
First we define $\Theta_i$ as the
angle\footnote{Note that $\Theta$ and $i$ are not independent because
$\Theta=\arctan(\frac{1}{8i}) \le \frac{1}{8i}.$}
between the rays emanating from
$(\frac{1}{2},0)$ through the upper corners $(0,4i)$ and $(1,4i).$
To get guard pattern A
we place four guards on the boundary of this cone:
two with $y={2i}$ and two with $y=4i.$
These four guards define a wanted tunnel which is thin in the sense that
the boundary of the tunnel stays in the box of width 1
for values $0\le y\le i$;
remember that we only care for the interior of $B_i.$
For technical reasons we add guards at $(0,2i)$ and $(1,2i)$
to avoid unwanted tunnels between neighboring guard patterns A.
In case we do not need a tunnel inside the cell we use pattern B:
three guards that are placed equidistant on the top edge of the pattern
make it impossible for any cone to enter this box from above deeper
than $y=2i.$
Next we subdivide the upper half of the box $B_{4i}$ in $8i$ cells
of width 1. The medial quarter is stamped with pattern A,
the remaining cells are stamped with pattern B
(see Figure~\ref{fig:squared2}).
\begin{figure}[t]
  \centering
  \includegraphics[width=.95\columnwidth]{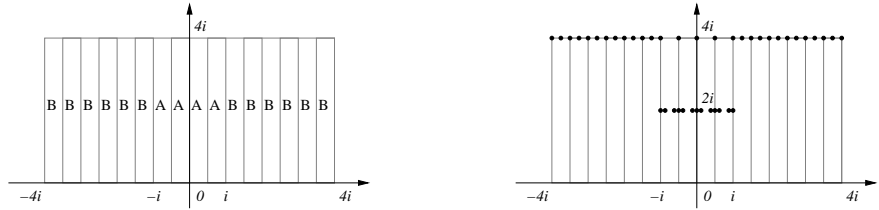}
  \caption{Subdivision of the upper half of $B_{4i}$ in $8i$ cells (left).
     The guard pattern for the upper half (right).}
  \label{fig:squared2}
\end{figure}

This way we can guarantee $2i$ wanted tunnels from above
which intersect $B_i$ and touch the $x$-axis.
After repeating this construction
for the lower, left, and right half of $B_{4i}$,
tunnels from above and below touch at the $x$-axis
as well as tunnels from the left and right touch at the $y$-axis.
This follows immediately from the symmetric construction.
Removing these tunnels from the box $B_i$,
yield to a fragmentation into $(2i+1)^2$ connected components.

Finally we remark that we can save up to 2 guards per stamped pattern
if guards overlap with the neighboring pattern.
Hence the number of guards,
that are placed in the entire first step, sum up to $80i+4.$

\subsection{Second step: To exclude the \emph{unwanted} tunnels.}
Again we start with the construction for the upper half of $B_{4i}.$
Figure~\ref{fig:squared8} shows the situation of the $2i$
neighboring guard patterns A.
We denote the guard pairs on the line $y=4i$ with $P_1,\dots, P_{2i}$
and the guard pairs on the line $y=2i$ with $Q_1, \ldots, Q_{2i}.$
\begin{figure}[t]
  \centering
  \includegraphics[width=.7\columnwidth]{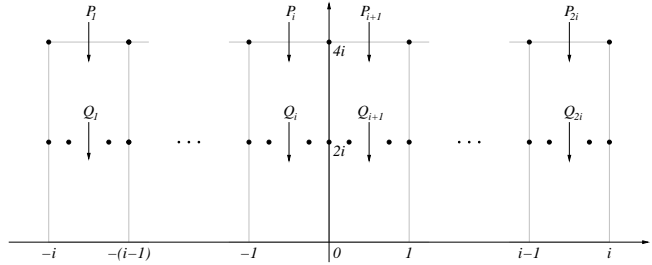}
  \caption{All possible guard pairs through which
    empty cones can reach $B_i$ from above.}
  \label{fig:squared8}
\end{figure}
An empty cone that enters $B_i$ from above therefore has to pass
through $P_k$ and $Q_\ell$ for some $k,\ell\in\{1,\dots,2i\}.$
If $k=\ell,$ the tunnel is wanted.
So the task is to hinder all
cones through tunnels with $k\neq\ell$
to intersect the box $B_i.$
We introduce a new notion to reformulate the problem.
\begin{definition}
Let $t$ be a tunnel through $P_k$ and $Q_\ell.$
We say that an empty cone enters tunnel $t$ \emph{the deepest}
if the $y$-value of its apex is minimal among all empty cones in $t$.
\end{definition}
Note that the deepest cone in a tunnel is unique and
is tangent to at least one guard on each ray.
We define the \emph{slope of a cone} as the slope of its medial axis.
Because of the regular structure of the cells
with guard pattern A we can make the following Observation.
Informally it states that the slope of a deepest cone
through $P_j$ and $Q_{j+h}$
is independent of $j$ and implicitly given by $h.$
\begin{observation}\label{obs:deepest}
  Let $h\in\{0,\dots,2i-1\}.$
  Let $c_j$ be the deepest cone through $P_j$ and $Q_{j+h}$
  and let $d_j$ be the deepest cone through $P_{j+h}$ and $Q_{j}$
  for all $j\in\{1,\dots,2i-h\}.$
  Then all cones $c_j$ have the same slope,
  and all cones $d_j$ have the same slope.
\end{observation}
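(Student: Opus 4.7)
The plan is to exploit the horizontal translation symmetry of the first-step construction. Because each pattern~A cell has width $1$ and the $2i$ pattern~A cells are arranged contiguously in the medial quarter, the pair $P_k$ is the image of $P_1$ under the horizontal translation $\tau_{k-1}\colon(x,y)\mapsto(x+k-1,y)$, and analogously $Q_\ell=\tau_{\ell-1}(Q_1)$. The technical extra guards placed at $y=2i$ and the guards of the pattern~B cells sitting to the left and right of the pattern~A region share this unit-horizontal periodicity inside the vertical strip of interest.

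First, I would characterize the deepest cone $c_j$ through $P_j$ and $Q_{j+h}$: by definition its apex minimizes the $y$-coordinate among apexes of empty $\Theta_i$-cones in that tunnel, so it must be tangent to at least one guard on each of its two rays — otherwise one could push the apex slightly further down while keeping the cone empty and still in the tunnel. Next, for an integer $m$ with $1\le j+m\le 2i-h$, I would apply the translation $\tau_m$. Since $\tau_m$ permutes the guards of the first-step construction — in particular $P_j\mapsto P_{j+m}$ and $Q_{j+h}\mapsto Q_{j+m+h}$ — the image $\tau_m(c_j)$ is an empty $\Theta_i$-cone through $P_{j+m}$ and $Q_{j+m+h}$ whose apex has exactly the same $y$-coordinate as that of $c_j$. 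Uniqueness of the deepest cone then forces $\tau_m(c_j)=c_{j+m}$, and since horizontal translation preserves slopes, all the cones $c_j$ share a common slope depending only on $h$. The argument for the mirror-slanted family $d_j$ is completely analogous, using the same $\tau_m$.

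The main obstacle — and really the only point that deserves careful checking — is that no guard outside the periodic first-step pattern can become tangent to a ray of $c_j$ and thereby break the symmetry. I would address this by confirming that the apex of $c_j$ lies inside the strip $0\le y\le 2i$, a region in which only pattern-A and pattern-B guards are present, and that the horizontal sweep of the cone between $P_j$ at $y=4i$ and $Q_{j+h}$ at $y=2i$ stays inside the fully periodic central region spanned by $P_1,\ldots,P_{2i}$ and $Q_1,\ldots,Q_{2i}$. Since $\Theta_i=\arctan(1/(8i))$ is tiny and $h\le 2i-1$, both checks reduce to elementary geometric estimates. Once this containment is established, the observation follows immediately from the translation argument above.
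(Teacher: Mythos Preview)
The paper does not actually supply a proof of this Observation; it is stated as a direct consequence of ``the regular structure of the cells with guard pattern~A'' and then used without further justification. Your translation-symmetry argument is precisely the formalization of that remark: the unit-width periodicity of the pattern~A cells makes the map $\tau_m$ carry the pair $(P_j,Q_{j+h})$ to $(P_{j+m},Q_{j+m+h})$, hence the deepest cone to the deepest cone, and slopes are preserved. So your approach coincides with what the paper intends, only spelled out in more detail.

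One small caution on wording: $\tau_m$ does not literally permute \emph{all} guards of the first-step construction (the pattern~B blocks at the outer cells and the guards of the other three halves are not globally shift-invariant). What matters, and what you correctly isolate in your ``main obstacle'' paragraph, is that the deepest cone through $P_j$ and $Q_{j+h}$ is tangent only to guards of the pattern~A region at $y=2i$ and $y=4i$, and these are unit-periodic across the full range $1\le j\le 2i-h$. Once that containment is checked, the translation argument goes through verbatim. This is exactly the verification the paper suppresses.
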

W.l.o.g.~we concentrate on the cones $c_j$,
but we can argue in the same way for the cones $d_j.$
We derive from Observation~\ref{obs:deepest} that for fixed $h$
the intersection of all deepest cones $c_j$
is again a cone with the same slope
(see Figure~\ref{fig:squared6}, left).
\begin{figure}[t]
  \centering
  \includegraphics[width=.9\columnwidth]{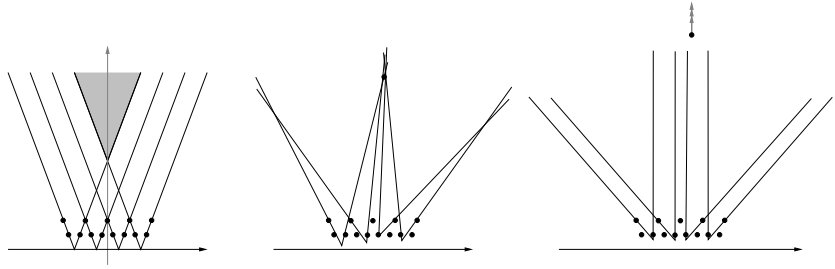}
  \caption{The deepest cones $c_j$ used in Observation~\ref{obs:deepest}.
    Here the vertical arrow marks the direction of their medial axis (left).
    The deepest cones that are tangent to the new vertex (middle).
    The deepest cones, in the case that the new guard is pulled far out
    in the direction of the former medial axis (right).}
  \label{fig:squared6}
\end{figure}
Assume we place a guard at a random position inside this intersection.
Then {none} of the cones $c_j$ is empty anymore.
That means that there are new deepest cones with different slopes,
since they have to be tangent to the new guard
(see Figure~\ref{fig:squared6}, middle).
Now we pull the new guard in the direction of the medial axis
of the former deepest cones towards infinity.
Then we observe, that while we move this point,
the deepest cones are rotated and
this way are pulled away from the $x$-axis.
Since the limes of the rotation,
compared to the slope of the original deepest cones $c_j$,
has absolute value  $\frac{\Theta_i}{2}$,
we can force a rotation by an angle
that is arbitrarily close to half of the apex angle,
i.e.\ $\frac{\Theta_i}{2}-\varepsilon$ for any $\varepsilon>0$
(see Figure~\ref{fig:squared6}, right).
Note the generality of the above discussion for all $h=\{0,\dots,2i-1\}.$
We will definitely \emph{not} place a guard in the \emph{union}
of the deepest cones for $h=0,$
since these tunnels force the decomposition of the $\Theta_i$-region
in the center.
In spite of this we have used exactly this case
in the drawings in Figure~\ref{fig:squared6},
since it depicts the worst-case scenario we will consider later in the proof.

Now we are able to complete the construction.
First we compute the slope of the medial axes
of the deepest cones $c_j$ as well as $d_j$ of Observation~\ref{obs:deepest}
for all $h=\{1,\dots,2i-1\}.$
For each of these $4i-2$ slopes
we place a guard at the intersection point
of the ray emanating from the origin having this slope
and the boundary of a new box $B_x$
(see Figure~\ref{fig:squared10}, left).
\begin{figure}[t]
  \centering
  \includegraphics[width=.7\columnwidth]{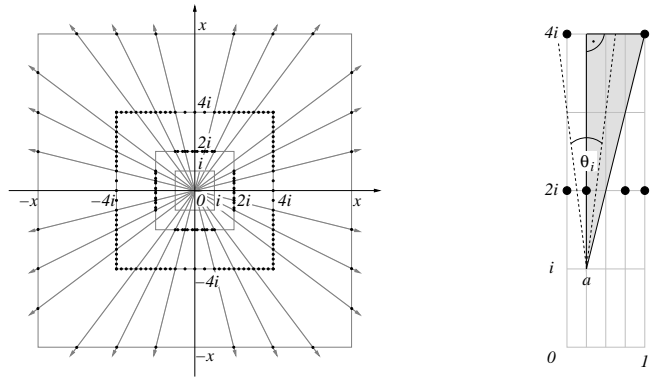}
  \caption{Construction of the second step for all parts:
     above, below, left, and right (left).
     Construction used in the proof of Lemma~\ref{lem:unwanted} (right).}
  \label{fig:squared10}
\end{figure}
Box $B_x$ has \emph{necessarily} to be large enough to guarantee,
that each intersection point lies
\begin{enumerate}
 \item outside the union of the wanted cones and
 \item inside the intersection of the deepest cones of the given slope.
\end{enumerate}
The existence of box $B_x$ with these properties
follows from the discussion above.
But this is not \emph{sufficient}.
It remains to prove the following lemma.
\begin{lemma}\label{lem:unwanted}
Box $B_x$ can be chosen large enough such that
no empty $\Theta_i$-cone, but the wanted cones, can intersect $B_i.$
\end{lemma}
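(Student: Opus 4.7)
The plan is to argue by contradiction and exploit the four-fold symmetry of the construction. Suppose $c$ is an empty $\Theta_i$-cone, distinct from the wanted cones, with apex $p\in B_i$; by symmetry we may assume $c$ opens into the upper half-plane. First I would verify that the pattern~B cells, together with the additional guards $(0,2i)$ and $(1,2i)$ inside each pattern~A cell, force the two rays of $c$ to cross the lines $y=4i$ and $y=2i$ at one of the gaps labelled $P_k$ and $Q_\ell$ respectively; any other crossing would enclose a first-step guard inside $c$. Since $c$ is not wanted, $k\ne\ell$, so setting $h:=|\ell-k|\in\{1,\ldots,2i-1\}$ and fixing the orientation by the sign of $\ell-k$, the cone $c$ lies in an unwanted tunnel of that type.

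Next I would invoke the obstacle guard $g_h$ placed on $\partial B_x$ along the direction of the common medial-axis slope $s_h$ given by Observation~\ref{obs:deepest}. Since $c$ is empty it must avoid $g_h$, which means the angle between the medial axis of $c$ and the vector from $p$ to $g_h$ is at least $\Theta_i/2$. For any apex $p\in B_i$ the direction from $p$ to $g_h$ differs from $s_h$ by an $O(i/B_x)$ term that tends to $0$ as $B_x\to\infty$; hence by choosing $B_x$ large enough the medial axis of $c$ is forced to deviate from $s_h$ by at least $\Theta_i/2-\varepsilon$, for any preselected $\varepsilon>0$.

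The final step is to show that inside an unwanted tunnel of type $(h,\mathrm{orient})$ with $h\ge 1$, any empty cone whose medial axis is rotated by at least $\Theta_i/2-\varepsilon$ from $s_h$ must have apex above the line $y=i$. My plan is to parametrize the empty cones of the tunnel by the slope of their medial axis, so the apex traces a chain of circular arcs coming from the inscribed-angle loci over the successive tangent chords, and then to verify that the apex $y$-coordinate, minimised at axis slope $s_h$, rises past $i$ as the rotation approaches $\Theta_i/2$. Geometrically, in this limit one ray becomes parallel to $s_h$; because the tunnel is strictly diagonal (its defining chord has horizontal extent of order $h\ge 1$), this forces the apex above $B_i$.

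The hardest step will be this last, quantitative geometric estimate: making the constants explicit enough to choose $B_x$ uniformly over $h\in\{1,\ldots,2i-1\}$, over $j$, and over both orientations, while handling the fact that the apex curve of the tunnel is only piecewise-arc (the tangent guards change along the rotation) so the bound must be glued across the transition slopes. Once these finitely many thresholds are controlled, $B_x$ can be taken as their maximum.
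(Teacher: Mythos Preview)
Your first three steps track the paper's setup: any empty cone reaching $B_i$ from above must thread some pair $P_k,Q_\ell$, and the far guard $g_h$ on $\partial B_x$ forces the medial axis of such a cone to deviate from $s_h$ by almost $\Theta_i/2$, uniformly over apices in $B_i$ once $B_x$ is large. The divergence is in how the last estimate is organised. You propose to handle each $h\in\{1,\dots,2i-1\}$ (and each orientation, each $j$) separately, parametrise the empty cones in that tunnel by medial-axis slope, trace the apex along the arc chain, and then take a maximum over the finitely many resulting thresholds. The paper instead performs a single worst-case reduction: among all $h$ the deepest cones are those with $h=0$ (their apices touch the $x$-axis), so it suffices to show that even an $h=0$ cone, after a rotation making one ray vertical, cannot have its apex at height~$i$. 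That reduces the whole lemma to one elementary triangle computation at the fixed point $(\tfrac14,i)$ inside a single pattern-A cell, showing the maximal empty angle there is strictly below $\Theta_i$.

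Your sketch for step~4 has a genuine gap. The heuristic you give---``the tunnel is strictly diagonal, so in the limit the apex is forced above $B_i$''---is not the operative mechanism and would not by itself yield the bound. What actually lifts the apex is that at height $y=i$ the gap visible through a pattern-A cell, once one ray is aligned with the old medial axis, subtends an angle strictly smaller than $\Theta_i$; this is precisely the computation the paper carries out. Without that angle estimate your case-by-case plan has no anchor, and the piecewise-arc gluing you anticipate does not supply it. The cleanest fix is to adopt the paper's reduction to $h=0$: it gives the needed inequality once and uniformly, so there is nothing to glue and no maximum over $h,j$ to control.
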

\begin{proof}
It is sufficient to prove the claim for the deepest cones 
with the minimum apex $y$-value amongst all deepest cones
according to Observation~\ref{obs:deepest}.
These are the cones for $h=0.$
(Please note that we do not block tunnels for $h=0$ in practice;
we just prove that we could even hinder cones through these tunnels
from entering $B_i.$)

Remember that we can place the guard which blocks the deepest cones
such that the deepest cones are rotated by an angle
arbitrarily close to $\frac{\Theta_i}{2}.$
W.l.o.g. we assume that the cone is rotated clockwise.
Consider the empty cone $c$ of maximum angle with apex at $a=(\frac{1}{4},i)$
inside a cell with guard pattern A
(shaded region in Figure~\ref{fig:squared10}, right).
This cone touches the boundary of $B_i$ and its
left ray is vertical as it is the case for maximal rotated deepest cones.
If we can show that the angle of $c$ is smaller than $\Theta_i$
it follows that $c$ can not enter $B_i$.
We crop $c$ at the line $y=4i$ to make it a rectangular triangle.
Now we take the $\Theta_i$-cone from pattern A, Figure~\ref{fig:squared3},
and move its apex to $a.$
We divide the triangle along the right boundary of the $\Theta_i$-cone
through point $(\frac{5}{8},4i).$
Consequently the left sub-triangle has angle $\frac{\Theta_i}{2}$
at point $a.$
Since the opposite leg of the entire triangle is 2-times the opposite leg
of the left sub-triangle,
the total angle of $c$ at $a$ is less than 2-times $\frac{\Theta_i}{2}.$
\qed
\end{proof}

After repeating this construction
for the lower, left, and right half
we have placed $16i-8$ additional guards.
Together with the guards from the first step they define the set $G_i$
with $n_i = 96i-4$ guards in total.
The generic example presented in this section proves the following Theorem.
\begin{theorem}
There is a sequence of inputs $(\Theta_i,n_i,G_i)_{i\in\N}$
with $\lim_{i\to\infty}\Theta_i=0$
such that the asymptotic bound on the
complexity of their $\Theta$-region is $\Omega(n^2)$
where $n$ is the number of guards.
\end{theorem}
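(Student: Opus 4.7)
The plan is to exhibit the explicit sequence $(\Theta_i, n_i, G_i)_{i\in\N}$ already constructed in this section and to verify three properties: (i) $\Theta_i\to 0$ as $i\to\infty$, (ii) $n_i = \Theta(i)$, and (iii) the $\Theta_i$-region of $G_i$ has complexity at least $c\cdot i^2$ for some absolute constant $c>0$. Given the footnote stating $\Theta_i=\arctan(\tfrac{1}{8i})$, property (i) is immediate, and summing $80i+4$ guards from the first step with $16i-8$ from the second step gives $n_i=96i-4$, which establishes (ii) and reduces the theorem to proving (iii).

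For property (iii), the plan is to show that the intersection of $B_i$ with the $\Theta_i$-region contains $\Omega(i^2)$ pairwise disjoint connected components, each of which contributes at least a constant to the total combinatorial complexity. The wanted tunnels built in the first step produce $2i$ vertical tunnels (one per guard pattern A in the upper and lower halves) and $2i$ horizontal tunnels (from left and right). By the symmetry of the construction and since each pattern A is engineered so that its tunnel is thin enough to remain inside a unit-width slab for $|y|\le i$, the union of all $4\cdot 2i$ wanted tunnels forms a $(2i)\times(2i)$ orthogonal grid of thin strips inside $B_i$, cutting it into exactly $(2i+1)^2$ rectangular-like cells. To turn these cells into genuine connected components of the $\Theta_i$-region, I would invoke Lemma~\ref{lem:unwanted} to argue that no unwanted empty $\Theta_i$-cone (through $P_k,Q_\ell$ with $k\ne\ell$, or any other pair in the second-step construction) enters $B_i$, so that no further thinning or merging of cells occurs.

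Next, for each of the $(2i+1)^2$ cells I need a lower bound on its contribution to the boundary complexity. The key observation is that each cell is bounded by pieces of at least four distinct tunnel walls meeting at corners, so each cell's boundary consists of $\Omega(1)$ circular arcs of $\mathcal{C}$; in particular a cell is non-empty and bounded by pieces of four different $U_i$-boundaries, so its boundary has at least constantly many arc endpoints. Summing these constants over all $(2i+1)^2\ge 4i^2$ cells yields a complexity $\Omega(i^2)=\Omega(n_i^2)$, which is the claimed bound. Since $\Theta_i$ is of order $\tfrac{1}{i}$, the same count also rewrites as $\Omega(n/\Theta)$, matching the upper bound of Theorem~\ref{tm:size} up to constants.

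The main obstacle I expect is verifying rigorously that the cells of the grid are indeed separated in the $\Theta_i$-region and that no stray empty $\Theta_i$-cone slips through between a pattern-A cell and its neighbours or between the second-step obstacle guards. The first-step auxiliary guards at $(0,2i),(1,2i)$ and the triple guard at the top of pattern B handle the local case, while the second-step construction, controlled by Observation~\ref{obs:deepest} and Lemma~\ref{lem:unwanted}, handles the long-range case for all slope classes $h\in\{1,\dots,2i-1\}$. Once these have been certified, the component count and the complexity lower bound follow by a direct combinatorial argument with no further geometric subtleties.
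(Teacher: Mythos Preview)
Your proposal is correct and follows essentially the same approach as the paper: the theorem is stated at the end of Section~\ref{s:lowerbound} precisely as a summary of the two-step construction given there, and your outline recapitulates that construction faithfully, including the guard counts $80i+4$ and $16i-8$, the $(2i+1)^2$ cells inside $B_i$, and the appeal to Observation~\ref{obs:deepest} and Lemma~\ref{lem:unwanted} to exclude unwanted tunnels. The only minor addition you make explicit---that each connected component contributes $\Omega(1)$ to the boundary complexity---is implicit in the paper's phrasing ``fragmented into $c\cdot n_i^2$ connected components, each of constant complexity.''
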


\section{Algorithm}\label{s:algorithm}

Here we discuss a way to compute the boundary of the $\Theta$-region.
Note that we gave bounds on the worst-case complexity
of the $\Theta$-region above. 
Clearly, for any $n$ and any $\Theta$ there are sets $G$
for which the $\Theta$-region is empty or extremely simple.
Despite of this our algorithm will consider the $O(\frac{n}{\Theta})$
arcs in $\mathcal{C}$ and hence can not be output-sensitive.
We allow a simplification in the presentation of the algorithm:
We will consider a set $\mathcal{C'}$ of arcs
which are longer on one side, i.e.\ $|\mathcal{C}|=|\mathcal{C'}|$ and
$\bigcup \mathcal{C} \subset \bigcup \mathcal{C'}$.

First we compute the convex hull $\ch(G)$ and
add for each hull edge $(u,v)$
the circular arc $C_{u,v}^{\Theta}$ to the set $\mathcal{C'}$.
For each guard $g$, that is not a vertex of $\ch(G)$,
we compute all empty cones of maximal angle with apex at $g$
together with two guards (witnesses) $g_{\min}$ and $g_{\max}$
per empty cone, which lie on its rays.
(See the light shaded cone in Figure~\ref{fig:algo2}.)
This can be done by
using the algorithm of Avis et al.~\cite{avis}
in $O((\frac{n}{\Theta})\log n)$ time and $O(n)$ space. 
\begin{figure}[t]
  \centering
  \includegraphics[width=.25\columnwidth]{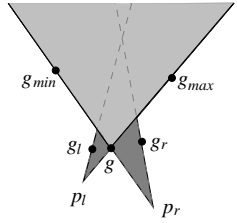}
  \caption{Guard $g$ can contribute to at most two end points
	   for each maximal empty cone with apex $g.$}
  \label{fig:algo2}
\end{figure}

As we did in the proof of Theorem~\ref{tm:size},
we find the arcs in $\mathcal{C}$ via their end points.
If we move an empty $\Theta$-cone with apex $g$
and its left ray through $g_{\min}$
along the line through $g$ and $g_{\min}$
until a guard, say $g_{r}$, is tangent to the other ray,
the new apex marks an end point $p_r$ of two arcs in the set $\mathcal{C}$
(see Figure~\ref{fig:algo2}).
Since we do not know
the second end points of the arcs,
we add the piece of $C_{g_{\min},g_r}^\Theta$ to $\mathcal{C'}$
that ends in $g_{r}$ and ${p_r}$,
and we add the piece of $C_{g,g_r}^\Theta$ to $\mathcal{C'}$
that ends in $g$ and ${p_r}$,
A similar construction for the line through $g$ and $g_{\max}$
will add another two arcs to $\mathcal{C'}$.

Note that by   
fixing the line through $gg_{\min}$ we can find 
guard $g_r$
naively by simply inspecting all guards in $G$, 
and similarly we can find $g_l$ for the line through $gg_{\max}$.
However, 
one can compute guards $g_r$ and $g_l$ faster with the help of the well-know 
\emph{Partition Theorem} that 
has been extensively used in the context of  range searching. 
We cite the theorem for a planar point set.
\begin{theorem}(Partition Theorem \cite{Matousek}.)\label{TM:Partition}
Any set $S$ of $n$ points in the plain
can be partitioned into $O(r)$ disjoint classes by a simplicial partition,
such that every simplex (i.e. triangle)
contains between $\frac{n}{r}$ and $\frac{2n}{r}$ points and
every line crosses 
at most $O(r^{\frac{1}{2}})$ simplices (crossing number). Moreover, for any
$\xi > 0$ such a 
simplicial partition can be constructed in $O(n^{1+\xi})$ time.
\end{theorem}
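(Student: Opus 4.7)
The plan is to prove the Partition Theorem by an iterative construction that alternates between extracting a single simplicial class and penalizing the lines that cross it, following the standard approach based on $\varepsilon$-cuttings and a weight-doubling argument.

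First, I would establish a selection lemma: given $n$ points $S$ in the plane and a set $L$ of lines with positive weights summing to $W$, there exists a closed triangle $\Delta$ containing between $n/r$ and $2n/r$ points of $S$ such that the lines of $L$ crossing $\Delta$ have total weight $O(W/\sqrt{r})$. To prove this, take a $(1/\sqrt{r})$-cutting of the weighted arrangement: the plane decomposes into $O(r)$ simplices, each crossed by lines of total weight at most $W/\sqrt{r}$. By pigeonhole, at least one such simplex contains $\Omega(n/r)$ points of $S$; trimming this simplex or merging it with a neighbour adjusts the cardinality into $[n/r,2n/r]$ while inflating the crossing weight by only a constant factor.

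Second, I would build the partition iteratively. Initialize $L$ as a \emph{test set} consisting of the $\binom{n}{2}$ lines spanned by pairs of points of $S$, each with unit weight. At every step, apply the selection lemma to the current point set to extract a class $(\Delta_i,S_i)$, then \emph{double} the weight of every line in $L$ crossing $\Delta_i$ and delete the points $S_i$. After $O(r)$ iterations fewer than $n/r$ points remain; these are placed in a final simplex. This yields the $O(r)$ classes of size $\Theta(n/r)$ required by the theorem.

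The crux, and main obstacle, is the crossing-number analysis. Let $\Phi$ be the total weight in $L$. Each iteration multiplies $\Phi$ by a factor of at most $(1+c/\sqrt{r})$, since the weight of the lines crossing $\Delta_i$ is $O(\Phi/\sqrt{r})$ and exactly those lines have their weights doubled. Hence after $O(r)$ iterations, $\Phi \le \binom{n}{2}\cdot\exp(O(\sqrt{r}))$. Since a line that crosses $k$ of the constructed simplices accumulates weight $2^k$, no line of $L$ can cross more than $O(\sqrt{r}+\log n)=O(\sqrt{r})$ simplices. The final subtlety is extending this to \emph{arbitrary} lines: because the combinatorial pattern of which simplices a line crosses depends only on its position relative to the arrangement of the lines through pairs of points of $S$, every line in the plane has the same crossing count as some suitably chosen line of $L$. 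The stated $O(n^{1+\xi})$ construction time follows because a $(1/\sqrt{r})$-cutting in the plane can be computed in $O(n^{1+\xi})$ time by standard cutting algorithms, and summing over the $O(r)$ rounds preserves the bound.
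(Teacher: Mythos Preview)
The paper does not prove this statement at all: the Partition Theorem is simply quoted from Matou\v{s}ek~\cite{Matousek} and used as a black box in the algorithm of Section~\ref{s:algorithm}. So there is no ``paper's own proof'' to compare against; your proposal is essentially a sketch of Matou\v{s}ek's original argument rather than anything the present paper supplies.

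As a sketch of that original proof, your outline has the right architecture (cutting-based selection lemma, iterative extraction, multiplicative weight update), but two of the closing steps do not hold as written. First, the bound $O(\sqrt{r}+\log n)=O(\sqrt{r})$ fails precisely in the regime the paper cares about, where $r$ is a \emph{constant}; the $\log n$ term then dominates. Matou\v{s}ek removes it by choosing a test set of only $O(r)$ lines with a specific ``test-set property'' rather than all $\binom{n}{2}$ spanned lines. Second, your justification for extending the crossing bound from $L$ to arbitrary lines is not correct: the simplices produced by the cuttings do not in general have their vertices in $S$, so the combinatorial crossing type of a line is \emph{not} determined by its cell in the arrangement of lines through pairs of points of $S$. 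The actual argument relies on the test-set property, which guarantees that every open cell in the arrangement of the chosen test lines contains at most $n/r$ points; this is what forces any line to behave like some test line with respect to the constructed simplices. The same choice of a small test set is also what makes the $O(n^{1+\xi})$ running time achievable---with $\binom{n}{2}$ weighted lines the per-round cutting computation is already too expensive.
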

Using this Theorem recursively one can construct a tree which is
called a \emph{partition tree} (e.g. the 
root of the tree, associated with $S$, has $O(r)$ children, 
each associated with a simplex from the first level, and so on). From now on we assume that 
$r$ is a constant. 
Observe that if $r$ is a constant, the partition tree is of $O(n)$ size
and it can be constructed in $O(n^{1+\xi})$ time for any $\xi>0$. 
\begin{lemma}
For any $\xi>0$, there is a data structure of $O(n\log n)$ size and $O(n^{1+\xi})$ construction time
such that for the given lines through 
$gg_{\min}$ and $gg_{\max}$, corresponding guards $g_l$ and $g_r$ can be computed 
in additional $O(n^{\frac{1}{2}+\xi})$ time.  
\end{lemma}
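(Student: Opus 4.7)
The plan is to phrase the computation of $g_r$ (and symmetrically $g_l$) as a halfplane-optimization query on $G$ and to answer it with a partition tree constructed from Theorem~\ref{TM:Partition}.

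First I would reformulate the geometric task. When the apex slides along the line $\ell$ through $g$ and $g_{\min}$, the left ray stays on $\ell$ while the right ray translates parallel to its initial direction $d$, which makes angle $\Theta$ with $\ell$. Among all guards lying in the open halfplane bounded by $\ell$ on the side into which the cone opens, the sought guard $g_r$ is precisely the one whose projection onto $\ell$ along $d$ is closest to $g$ in the sliding direction. This projection is a linear functional in the guard's coordinates, so the task reduces to: minimize a linear function over the guards contained in a query halfplane, with one additional halfplane constraint perpendicular to $\ell$ expressing the ``ahead of $g$'' restriction. The symmetric query on the line through $g$ and $g_{\max}$ returns $g_l$.

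Second, I would preprocess $G$ by applying Theorem~\ref{TM:Partition} recursively with a constant branching factor $r$, yielding a partition tree of $O(n)$ nodes and depth $O(\log_r n)$, built in $O(n^{1+\xi})$ time. At each internal node $v$ I would store, as secondary information, the convex hull of the guards assigned to $v$'s subtree; this supports extremal-point queries in any direction in $O(\log n)$ time via standard binary search on the hull. Since each guard appears in $O(\log n)$ subtrees, the total secondary storage is $O(n\log n)$, matching the claimed size.

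Third, I would answer a query by descending the tree. For each child simplex $\sigma$ at the current node: if $\sigma$ lies entirely on the wrong side of $\ell$, discard it; if $\sigma$ lies entirely on the correct side, query its stored convex hull in direction $d$ in $O(\log n)$ time and update the running optimum; otherwise recurse. By the crossing-number bound $O(r^{1/2})$ from Theorem~\ref{TM:Partition}, the recursion $Q(n)=O(r^{1/2})\,Q(n/r)+O(r\log n)$ solves to $Q(n)=O(n^{1/2+\xi})$ for any $\xi>0$, absorbing the logarithmic factor of the secondary query into the $n^{\xi}$ slack by slightly inflating the $\xi$ used in the Partition Theorem.

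The main obstacle is not the algorithmic machinery, which is a textbook partition-tree application, but the careful geometric bookkeeping: verifying that the objective direction $d$, the halfplane side of $\ell$, and the ``ahead of $g$'' constraint combine into a single well-posed linear-optimization-over-a-halfplane query that is compatible with the partition tree. Once this is settled, the construction time, space, and query bounds from the Partition Theorem yield the claimed complexities, and both $g_r$ and $g_l$ are found in $O(n^{1/2+\xi})$ time per query.
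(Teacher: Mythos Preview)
Your proposal is correct and follows essentially the same route as the paper: build a partition tree via Theorem~\ref{TM:Partition} with constant branching $r$, augment each node with the convex hull of the points in its subtree (yielding $O(n\log n)$ space and $O(n^{1+\xi})$ preprocessing), and answer a query by recursing on the $O(r^{1/2})$ crossed simplices while handling each fully-contained simplex by an $O(\log n)$ extremal-direction query on its stored hull. Your write-up is in fact more explicit than the paper's---you spell out the recurrence and the geometric reduction to a linear objective over a halfplane---but the underlying idea and the resulting bounds are identical; the extra ``ahead of $g$'' halfplane you introduce is not used in the paper's version, though it would not affect the asymptotics even if it were needed.
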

\begin{proof}
  Assume we are given a partition tree and suppose that we fix 
  the line through $gg_{\max}$.  Clearly by Theorem~\ref{TM:Partition} we
  have the bound on the number of triangles that intersect the line
  which is $O(\sqrt{r})$. On those triangles we recur,
  which leads to a total of $O(\sqrt{n})$ triangles intersected by the
  line. But still there might be $O(r)$ triangles lying
  completely to the left of the line $gg_{\max}$. 
  For those triangles
  we can precompute a convex hull for the points inside each
  triangle. This will
  increase the total space of the partition tree by a $O(\log n)$
  factor since every level in the tree now will be of $O(n)$ size.
  However, this way we avoid recursing on the triangles that lie
  completely to the left of $gg_{\max}$. Namely, for every triangle that lies
  to the left of $gg_{\max}$, guard $g_l$ can be found as an 
  extreme point of the precomputed convex hull in the direction perpendicular to the line
  that forms the $\Theta$-cone with the line through $gg_{\max}$ in $O(\log n)$ total time
  (see~\cite{orourke00}, Section 7.9).
  The case for the line through $gg_{\min}$ is similar.
  \qed
\end{proof}

Therefore we can state the following Lemma.
 
\begin{lemma}
For any $\xi>0$, the set $\mathcal{C'}$ can be computed in $O(n^{\frac{3}{2}+\xi}/\Theta)$ time and $O(n\log n)$ space. 
\end{lemma}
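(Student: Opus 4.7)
The plan is essentially accountancy: sum the time and space costs of every stage of the algorithm already described and verify which term dominates.

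First I would recall the stages. Stage~(a): compute the convex hull of $G$ and add one arc $C_{u,v}^\Theta$ per hull edge to $\mathcal{C}'$, costing $O(n\log n)$ time and $O(n)$ space. Stage~(b): build the augmented partition tree from the preceding Lemma; by that Lemma this costs $O(n^{1+\xi})$ preprocessing time and $O(n\log n)$ space. Stage~(c): run the variant of Avis et al.'s algorithm (Lemma~\ref{lm:avis}) on the set of all non-hull guards to obtain, for every maximal empty $\Theta$-cone with apex at some guard $g$, the two witness guards $g_{\min}$ and $g_{\max}$ on its rays; this costs $O((n/\Theta)\log n)$ time and $O(n)$ space. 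Stage~(d): for each maximal empty cone, locate the end points $p_r$ and $p_l$ and append the (at most) four corresponding arcs to $\mathcal{C}'$.

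The key quantitative input is that the number of maximal empty cones over all guards is $O(n/\Theta)$, because each guard admits at most $\lfloor 2\pi/\Theta\rfloor$ such cones (the same argument underlying Theorem~\ref{tm:size}). For each maximal empty cone, identifying the end point $p_r$ reduces to finding the guard $g_r$ first hit when the left ray is fixed on the line through $gg_{\min}$ and the cone slides along it; by the preceding Lemma this takes $O(n^{1/2+\xi})$ time per query against the augmented partition tree, and the symmetric query delivers $g_l$. Summing over all $O(n/\Theta)$ cones gives
\begin{equation*}
  O\!\left(\frac{n}{\Theta}\cdot n^{1/2+\xi}\right) \;=\; O\!\left(\frac{n^{3/2+\xi}}{\Theta}\right)
\end{equation*}
time for Stage~(d), and only $O(1)$ work per end point is needed to emit the arcs into $\mathcal{C}'$.

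Collecting the four stages, the stage-(d) bound dominates the $O(n\log n)$ hull cost, the $O(n^{1+\xi})$ preprocessing (absorbed because $\Theta<\pi$ forces $n/\Theta \ge n/\pi$, so $n^{3/2+\xi}/\Theta \ge n^{1/2+\xi}\cdot n/\pi$), and the $O((n/\Theta)\log n)$ term from Avis et al. Space is dominated by the augmented partition tree at $O(n\log n)$, since all other structures are linear. The main obstacle, namely obtaining sublinear query time for locating $g_l$ and $g_r$, has already been dispatched by the previous Lemma; what remains here is only to check that no auxiliary step blows up these dominating terms, and it does not.
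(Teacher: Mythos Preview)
Your proposal is correct and matches the paper's approach exactly. The paper itself does not give a separate proof of this lemma; it simply writes ``Therefore we can state the following Lemma'' after the algorithm description and the data-structure lemma, leaving the cost accounting implicit. Your write-up is precisely that accounting made explicit: the $O(n/\Theta)$ bound on the number of maximal empty cones (one query pair each), the $O(n^{1/2+\xi})$ per-query cost from the preceding lemma, and the observation that all other stages are dominated.
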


Next we discuss how to compute the $\Theta$-guarded region from the set of circular arcs $\mathcal{C'}$. 
For each connected component of the $\Theta$-region the algorithm  
outputs a sequence  $p_1,\ldots, p_k$ of points in the plane and circular 
arcs incident with pairs $p_{i-1}, p_i$ for
$i=2,\ldots, k$ and $p_k, p_1$ as edges of the $\Theta$-region.

We start with computing the {arrangement} 
$\mathcal{A}(\mathcal{C'})$ of set $\mathcal{C'}$. 
Let $\psi$ denote the number of cells in $\mathcal{A}(\mathcal{C'})$ and
let $\mu$ denote the total complexity of
the arrangement $\mathcal{A}(\mathcal{C'})$,
which upper bounds the complexity of the $\Theta$-region. 
Edelsbrunner et al.~\cite{edelsbrunner92} showed that $\mu$ is at most
$O(\sqrt{\psi} (\frac{n}{\Theta}) 2^{\alpha(n)})$, where $\alpha(\cdot)$ is the inverse 
Ackerman function which is an extremely slow-growing function. 
Moreover, the arrangement $\mathcal{A}(\mathcal{C'})$
can be constructed 
in $O((n+\mu) \log n)$ time by the plane-sweep algorithm
of Bentley and Ottman \cite{bo79}

Since arcs in $\mathcal{C'}$ are bounding circular segments from the Formula~(\ref{f:complexity}), 
cells in the arrangement $\mathcal{A}(\mathcal{C'})$ will have the property that  they 
are either $\Theta$-guarded or not $\Theta$-guarded. 
Hence, if some point from the 
cell is $\Theta$-guarded then the whole cell belongs to the $\Theta$-region and opposite. 
Let $P$ denote the set of $\psi$ different points such that each point is taken from the interior 
of $\psi$ different cells in $\mathcal{A}(\mathcal{C'})$. To detect the cells that belong
to the $\Theta$-region, we use the following lemma. 
\begin{lemma}[Avis et al.~\cite{avis}] \label{lm:avis}
Let $G$ be a set of $n$ guards and
let $P$ be a set of $\psi$ query points in $\R^2$.
The $\Theta$-unguarded points 
of $P$ can be reported together with their witnesses $g_{\min}$ and $g_{\max}$
in $O(\frac{n+\psi}{\Theta}\log (n+\psi))$ time and $O(n)$ space. 
\end{lemma}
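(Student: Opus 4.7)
The plan is to adapt the Avis~et~al.\ algorithm from the single-set setting, where the same set $G$ plays both the roles of apex candidates and obstacles, to a two-set setting in which $P$ supplies apex candidates and $G$ supplies obstacles. Recall that the core of their algorithm performs, for each apex, a radially-sorted angular sweep over the remaining points and uses an amortized argument to bound the total work by $O(\frac{n}{\Theta}\log n)$: an angular gap of size at least $\Theta$ at an apex certifies an empty $\Theta$-cone there, and such certificates can be charged so that the total number of processed events is $O(\frac{n}{\Theta})$, each costing $O(\log n)$.

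First I would rerun exactly the same machinery on the combined set $S := G \cup P$ of size $m := n+\psi$, keeping the radial data structures and the event-driven sweep unchanged. The only modifications are predicate-level: (i) only apexes belonging to $P$ are tested and reported, and (ii) the emptiness test at an apex $p \in P$ ignores all other points of $P$ and declares a cone empty whenever it contains no point of $G$ in its interior. Concretely, when forming the angularly sorted sequence around $p$, points of $P\setminus\{p\}$ are tagged as transparent and skipped when accumulating the current angular gap towards $\Theta$. This does not alter the sweep schedule, only the gap-update rule.

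Since the amortized bound of Avis~et~al.\ depends only on the total size of the point set participating in the angular sweeps and on $\Theta$, substituting $m=n+\psi$ for $n$ throughout gives total running time $O(\frac{n+\psi}{\Theta}\log(n+\psi))$, and the space remains $O(n)$ because only $G$ must be stored for the obstacle role while the sweep discards completed events. The witnesses $g_{\min}$ and $g_{\max}$ come for free: when an empty $\Theta$-cone is identified at $p\in P$, its two bounding rays pass through the $G$-predecessor and the $G$-successor of the detected gap in the angular order around $p$, and these two guards can be read off the sweep's data structure in $O(1)$.

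The main obstacle will be verifying that the amortization of Avis~et~al.\ is truly sensitive only to the aggregate size of the point set and not to the fact that in their setting every point simultaneously plays the role of an apex and of an obstacle. I expect this to go through because their charging scheme counts ``empty $\Theta$-cone detected'' events, and each such event still consumes exactly one slot in the angular order around the combined set; the transparent $P$-points can only shorten the angular gaps at a $G$-apex, which never increases the number of reported cones. A careful re-reading of their analysis, substituting $m$ for $n$ and discarding all events whose apex lies in $G$, should formalize the bound.
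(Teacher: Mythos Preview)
Your high-level strategy---run the Avis et al.\ machinery on $G\cup P$ while separating the roles of query apexes and obstacles---matches the paper's approach. However, your description of what that machinery actually does is wrong, and this invalidates the time-bound argument you give.

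The Avis et al.\ algorithm does \emph{not} perform a radially sorted angular sweep around each apex; that alone would cost $\Omega(n^2\log n)$ for the $n$ angular sorts, and no amortization rescues you from it. Instead, their procedure \emph{Unoriented Maxima} fixes $O(\pi/\Theta)$ directions, sorts the entire point set once per direction, and maintains incremental convex hulls while sweeping along each direction; a candidate apex is tested by computing tangent lines from it to the current hull. The paper's adaptation is therefore structural rather than predicate-level: in Steps~2 and~3 of \emph{Unoriented Maxima}, only guards are inserted into the convex hull constructions, while tangents are computed only from the query points in $P$. The running time is dominated by sorting $G\cup P$ in each of the $O(1/\Theta)$ directions, which gives $O\bigl(\frac{n+\psi}{\Theta}\log(n+\psi)\bigr)$ directly.

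Your ``transparent tagging'' of $P$-points in a per-apex angular order, and the charging scheme you sketch for empty-cone events, do not correspond to anything in the actual algorithm, so the justification you offer for the bound does not go through as written. The fix is to look at how the procedure actually works and observe, as the paper does, that the two roles (hull insertions versus tangent queries) are already cleanly separated in its steps; the extension then really is immediate.
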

\begin{proof}
Avis et al. \cite{avis} presented an algorithm to compute all
$\Theta$-unguarded guards of $G$ in $O(\frac{n}{\Theta}\log n)$ time 
and $O(n)$ space.
So far the set of query points and the set of guards are the same.
But since
their algorithm actually distinguishes between query points and guards,
it can be extended immediately:
In Steps~2 and 3 of procedure \emph{Unoriented Maxima} on page 284f.
only guards are inserted into the convex hull constructions,
while tangents to these convex hulls
are only computed through query points.
The running time of the algorithm is dominated by sorting 
the points in $G\cup P$ for
$\frac{\pi}{\Theta}$ %
many directions
which takes $O(\frac{n+\psi}{\Theta}\log (n+\psi))$ time.
For more details see Section 2 and Section 6 (Appendix) in~\cite{avis}.
\qed
\end{proof}

At the end we collect all $\Theta$-guarded
cells and output the sequence of nodes and edges on the 
boundary of the union of them. 
We conclude with the following Theorem. 
\begin{theorem} 
For any $\xi>0$, the $\Theta$-region for $\Theta < \pi$ can be computed in time  
$O(n^{\frac{3}{2}+\xi}/\Theta + \mu \log n)$,
where $\mu$ denotes the complexity of the arrangement $\mathcal{A}(\mathcal{C'})$.
\end{theorem}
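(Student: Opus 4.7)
The plan is to chain together the three preceding ingredients: the partition-tree-based construction of $\mathcal{C'}$, the plane-sweep construction of the arrangement $\mathcal{A}(\mathcal{C'})$, and the batched $\Theta$-guardedness test of Lemma~\ref{lm:avis}. Since each step's running time was already bounded, the proof will amount to checking that the bounds compose correctly and that the output produced is indeed the boundary of the $\Theta$-region.

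First I would invoke the preceding lemma to compute the set $\mathcal{C'}$ of $O(\frac{n}{\Theta})$ circular arcs in $O(n^{\frac{3}{2}+\xi}/\Theta)$ time. Then I would apply the Bentley--Ottmann plane-sweep \cite{bo79} to produce the arrangement $\mathcal{A}(\mathcal{C'})$ in $O((n+\mu)\log n)$ time, where $\mu$ is the total complexity of the arrangement. Each face of this arrangement is, by construction and by Formula~(\ref{f:complexity}), either entirely inside or entirely outside the $\Theta$-region, since the bounding arcs of $\mathcal{C'}$ contain the boundary $\partial\Theta\textrm{-region}$.

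Next I would pick one representative point from the interior of each of the $\psi$ faces (easily done while sweeping or by a traversal of the arrangement's doubly connected edge list in $O(\mu)$ time) and feed them as the query set $P$ to the algorithm of Lemma~\ref{lm:avis}. This labels every face as guarded or unguarded in $O(\frac{n+\psi}{\Theta}\log(n+\psi))$ time. Note that $\psi\le\mu$, so this term is absorbed into $O(\mu\log n + \frac{n}{\Theta}\log n)$, which in turn is dominated by $O(\mu\log n)$ plus the arc-construction cost.

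Finally, to output the $\Theta$-region I would traverse the arrangement and collect the edges that separate a guarded face from an unguarded one; walking along these edges in the DCEL yields for each connected component the required cyclic sequence $p_1,\ldots,p_k$ of vertices together with the incident circular arcs. This post-processing is linear in $\mu$. Summing the three costs gives $O(n^{\frac{3}{2}+\xi}/\Theta) + O((n+\mu)\log n) + O(\frac{n+\mu}{\Theta}\log(n+\mu)) = O(n^{\frac{3}{2}+\xi}/\Theta + \mu\log n)$, as claimed. The only slightly delicate step is the justification that arrangement faces are monochromatic with respect to $\Theta$-guardedness, which follows because every point where guardedness can change lies on some arc of $\mathcal{C}\subseteq\mathcal{C'}$, hence on an edge of $\mathcal{A}(\mathcal{C'})$; I expect this to be the only place where care is needed, and it is immediate from Formula~(\ref{f:complexity}).
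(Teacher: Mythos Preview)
Your proposal is correct and follows essentially the same approach as the paper: compute $\mathcal{C'}$ via the partition-tree lemma, build $\mathcal{A}(\mathcal{C'})$ with Bentley--Ottmann, observe via Formula~(\ref{f:complexity}) that each cell is monochromatic, classify one witness point per cell using Lemma~\ref{lm:avis}, and read off the boundary. The only place where your write-up is slightly loose is the absorption of the $O(\tfrac{\psi}{\Theta}\log(n+\psi))$ term into $O(\mu\log n)$, which silently uses $\tfrac{1}{\Theta}=O(n)$ (valid by Lemma~\ref{l:minimaltheta}) together with $\mu\ge |\mathcal{C'}|=\Omega(\tfrac{n}{\Theta})$; the paper glosses over this as well.
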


\section{Conclusion}\label{s:conclude}
In this paper we consider a point to be guarded,
if it is guarded from 'all' sides
by a given finite set of guards $G$.
Our main goals were to analyze the shape and
the complexity of the $\Theta$-region,
i.e.\ the set of all $\Theta$-guarded points,
and give a mathematical description of it.

As a result,
we showed that the $\Theta$-region is defined by a set of at most
$O(\frac{n}{\Theta})$ many circular arcs. 
The difficulty in the complexity analysis of the $\Theta$-region itself
appeared while arguing about the complexity of the union of convex sets $U_i$
which are bounded by these arcs (cf. Formula~\ref{f:complexity}).
In dependency on $\Theta$
we summarize our results on the worst-case complexity
of the $\Theta$-region in Table~1.
Furthermore, we could give a series of inputs
with decreasing angle and increasing number of guards
whose asymptotic complexity is $\Omega(n^2)$.
Finally we gave an algorithm to compute the $\Theta$-region.
\begin{table}[t]%
\begin{center}\begin{tabular}{|@{\quad}c@{\quad}|@{\quad}c@{\quad}|}
\hline
\rule{0ex}{2.5ex} angle $\Theta$              & worst-case complexity \\[.5ex]\hline
\rule{0ex}{2.5ex} $\pi\le\Theta<2\pi$           & $|\ch(G)|$ vertices \\[.5ex]
$\frac{\pi}{2}\le\Theta<\pi$ & $O(n)$ \\[.5ex]
$ \delta <\Theta<\frac{\pi}{2},$ for constant $\delta>0$ &  $O(n^{1+\varepsilon}),$ for any $\varepsilon>0$ \\[.5ex]\hline
\end{tabular}\end{center}%
 \caption{The worst-case complexity of the $\Theta$-region
      in dependency on the angle $\Theta$.}%
\end{table}

\section{Appendix: Technical Mistakes in a Recent Publication}
Abellanas et al.~\cite{ACM08} claim the complexity
of the $\Theta$-region to be $O(n)$ for a fixed value of $\Theta\in(0,\pi]$
and suggest an algorithm to compute it.
Because the paper is not precise in several places,
we will point out mistakes
that show gaps in the proofs of the main results.

\subsubsection*{Mistake in the Construction.}
The entire construction, and hence the results,
are based on the existence of certain arc chains,
one per convex hull edge,
which are introduced in Definition~4 on page~367.
But Definition~4 is actually a characterization of a point-set.
That means Definition~4 contains a \emph{hidden and unproved Lemma,}
stating that the defined point-set forms a chain of arcs.
In the context of other mistakes
we will argue below
that these chains \emph{do not exist in general for acute $\Theta<\frac{\pi}{2}.$}

We begin with Definition~3 on page~366.
(Be aware of the index typos.)
It is stated that a guard pair $(s_i,s_j)$ can only be
entered by empty cones from one direction,
and hence one of the two connecting arcs is excluded a priori.
But this is not generally true for $\Theta<\frac{\pi}{2}$
as the guard pair $(a,c)$ in Figure~\ref{fig:mistakes_4} shows.
In the following we assume that, however,
both directions
can be distinguish and handled somehow.

Next we focus on the construction of the chain
as is given in the proofs of Lemma~3 and Lemma~4 on pages~368--370.
There vertices are connected via two arc pieces to a predecessor
and a successor only, guaranteeing a vertex degree of 2 by construction.

Because of the mentioned mistake in Definition~3,
there is now a misunderstanding of the terms
predecessor and successor of an arc end point:
The predecessor (resp. successor) is the left (right) end point
of the connecting arc piece
\emph{viewed from the direction of the empty cone}
that supports this arc.
(By the way, this cannot be decided only by the $x$-coordinate of guards,
as is assumed for the partitioning of the set $M_{s_k,s_{k+1}}$
in the left column, above Figure~3, on page 367.)

\begin{figure}[t]
  \centering
  \includegraphics[width=.6\columnwidth]{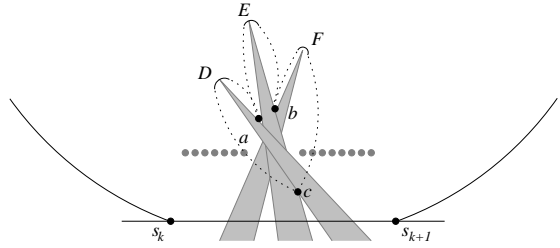}
  \caption{Example that guards can not be ordered linearly
    for $\Theta<\frac{\pi}{2}$ in all cases.}
  \label{fig:mistakes_4}
\end{figure}
Now we have a look at \emph{our} Figure~\ref{fig:mistakes_4}.
There $(s_k,s_{k+1})$ is a convex hull edge and
$a,$ $b,$ $c$ denote guards.
There may be also additional (shaded) guards but we rely on the existence
of the three empty (shaded) cones.
Then $D,$ $E,$ and $F$ denote non-empty
arc pieces on the pairwise arc chains (dotted lines) between these guards.
During the construction the guards (even if they are removed later on)
and arc pieces get their position in the chain
that belongs to the convex hull edge.
We derive an order on these objects from the empty cones:
$c\prec D\prec a$ and $a\prec E\prec b$ and $b\prec F\prec c.$
These constraints on the order can only be satisfied
by a loop that contains these guards and arc pieces,
since every vertex has degree 2.
This contradicts the assumption of the existence of an arc chain.

\paragraph{Remark:}
Because of this general argumentation,
we are convinced that
the task of finding the $\Theta$-region has to be motivated,
however, rather by the directions of empty cones
than by the convex hull edges.
(Note that each instance of the generic example in Section~\ref{s:lowerbound}
can be embedded in a huge box with just four hull edges.)

\subsubsection*{Mistake in the Complexity Statement.}
In Proposition~2 on page~370
the complexity of the $\Theta$-region is claimed to be
linear for any fixed value of $\Theta.$
Below we point out that
this complexity bound is not justified
in the {most crucial} step of the proof,
hence \emph{leaving a gap in the proof.}

The proof begins with a consideration on
the boundary of the $\Theta$-region
restricted to a convex hull edge ${(s_k,s_{k+1})},$
i.e.\ cones can now only enter the hull through ${(s_k,s_{k+1})}.$
It is stated,
that the boundary is the chain of arcs that corresponds to this edge
and that it has linear size.
Although we have proven above
that this chains do not exist in general for $\Theta<\frac{\pi}{2},$
and hence we do not know anything about the complexity of
a proper replacement for these chains,
\emph{we assume just for the sake of argumentation} that this would be true, however.

In the second half of the proof
the total number of vertices in all arc chains is estimated.
But what is missing in the counting argument for the complexity of
the union of the $\Theta$-regions, that are restricted to a convex hull edge,
are the
\emph{intersections between these restricted $\Theta$-regions.}
(Remember that we have built our examples with quadratic complexity {only}
on this kind of intersections in Section~\ref{s:lowerbound}.)
No characteristic of the objects is given here,
why the complexity of the union should be $O(n)$ for fixed values of $\Theta;$
even fatness and convexity seem not to be strong enough for this claim.

The whole proof of this Proposition
seems to go much more along the line of proving 
that the $\Theta$-region can be \emph{described} by a linear number of arcs,
which we have also proven in Theorem~\ref{tm:size}, than 
arguing about the complexity of the $\Theta$-region itself. 
Without further argumentation there is just
an upper bound of
$O(n^2)$ on the complexity for constant angles
since the $\Theta$-region is embedded
in the arrangement of $O(n)$ arcs.

\small 
\bibliographystyle{abbrv}

\begin{thebibliography}{99}


\bibitem{ABM07}
M.~Abellanas, A.~Bajuelos and I.~Matos.
\newblock Some Problems Related to Good Illumination,
\newblock{\it International Conference on Computational Science and Its Applications}, 2007, pp.~1--14.


\bibitem{ACM08}
M.~Abellanas, M.~Claverol and I.~P.~Matos.
\newblock{The $\alpha$-Embracing Contour.}
\newblock{\it International Conference on Computational Science and Its Applications}, 2008, pp.~365--372.

\bibitem{Agarwal03}
P.~K.~Agarwal, B.~Aronov and M.~Sharir.
\newblock{On the complexity of many faces in arrangements of pseudo-segments and of circles,} 
\newblock{\it Discrete and Computational Geometry---The Goodman-Pollack Festschrift,}
\newblock{2003, pp.~1--23.}


\bibitem{avis}
D.~Avis, B.~Beresford-Smith, L.~Devroye, H.~Elgindy, E.~Gu\'{e}vremont, F.~Hurtado and B.~Zhu.
\newblock{Unoriented $\Theta$-Maxima in the Plane: Complexity and Algorithms,}
\newblock{\it SIAM Journal on Computing,} Vol. 28, 1999, pp.~278--296.

\bibitem{bo79}
J.~L. Bentley and T.~Ottman.
\newblock{Algorithms for reporting and counting geometric intersections,}
\newblock{\it IEEE Transactions on Computers C}, Vol.~28, 1979, pp.~643--647.


 %

\bibitem{c85}
B.~Chazelle.
\newblock{On the convex layers of a point set,} 
\newblock{\it IEEE Transactions on Information Theory,} Vol. 31, No. 4, 1985, pp.~509--517.



\bibitem{ch75}
V.~Chv\'atal.
\newblock{A Combinatorial Theorem in Plane Geometry,}
\newblock{\it Journal of Combinatorial Theory B,} Vol.~18, 1975, pp.~39--41.

\bibitem{csy87}
R. Cole, M. Sharir and C. Yap. 
\newblock{On k-hulls and related problems,}
\newblock{\it SIAM Journal on Computing,} Vol. 16(1), 1987, pp.~61--67.

\bibitem{edelsbrunner92}
H.~Edelsbrunner, L.~Guibas, J.~Pach, R.~Pollack, R.~Seidel and M.~Sharir.
\newblock{Arrangements of curves in the plane---topology, combinatorics, and algorithms,}
\newblock{\it Theoretical Computer Science,} Vol. 92, Issue 2, 1992, pp.~319--336.

\bibitem{eks83}
 H.~Edelsbrunner, D.~G.~Kirkpatrick and R.~Seidel.
 \newblock{On the Shape of a Set of Points in the Plane,}
 \newblock{\it IEEE Transactions on Information Theory,} Vol. 29, No. 4, 1983, pp.~551--559.

\bibitem{efrat97}
A.~Efrat and M.~Sharir.
\newblock{On the complexity of the union of fat objects in the plane},
\newblock{\it SCG '97: ACM Proceedings of the thirteenth annual symposium on Computational geometry,}
\newblock{1997, pp.~104--112.}






\bibitem{kedem90}
K.~Kedem, R.~Livne, J.~Pach and M.~Sharir.
\newblock{On the union of Jordan regions and collision-free translational motion amidst polygonal obstacles,}
\newblock{{\it Discrete and Computational Geometry,} {Vol. 1}, 1986, pp.~59--71.}

\bibitem{Matousek}
J.~Matousek.
\newblock{Efficient Partition Trees,}
\newblock{{\it Discrete and Computational Geometry,} Vol. 8, 1992, pp.~315--334.}

\bibitem{orourke87}
J.~O'Rourke. 
\newblock{\it Art gallery theorems and algorithms,}
\newblock{Oxford University Press, Inc.,} 1987.

\bibitem{orourke00}
J.~O'Rourke. 
\newblock{\it Computational Geometry in C}, 2nd Edition, 
\newblock{{Cambridge University Press,} 2000.}

\bibitem{ps85}
F.~P.~Preparata and M.~I.~Shamos. 
\newblock{\it Computational Geometry: An Introduction.}
\newblock{Springer-Verlag, New York,} 1985.

\bibitem{sharir}
M.~Sharir and P.~K.~Agarwal.
\newblock{\it Davenport-Schinzel sequences and their geometric applications,}
\newblock{Cambridge University Press, NY,} 1996.

\bibitem{u-agip-00}
J.~Urrutia.
\newblock{Art Gallery and Illumination Problems,}
\newblock in J{\"o}rg-R{\"u}diger Sack and Jorge Urrutia, editors, {\it Handbook of Computational Geometry}, 2000, pp.~973--1027.


\end{thebibliography}

\end{document}